\documentclass[11pt]{article}

\usepackage{fullpage,amsthm}
\usepackage[T1]{fontenc}
\usepackage[utf8]{inputenc}

\usepackage{graphicx} 
\usepackage{array} 
\usepackage{amsmath, amssymb, amsfonts, color, verbatim}
\usepackage{epsfig,subfigure,multirow}
\usepackage[usenames,dvipsnames]{xcolor}
\usepackage{hyperref}
 \hypersetup{
  colorlinks=true,
  citecolor=ForestGreen,
}

\usepackage{url}
\usepackage{xspace}
\usepackage[mathscr]{euscript}
\usepackage{algorithm2e, algorithmic}
\usepackage{cite}
\usepackage{enumerate}
\newtheorem{theorem}{Theorem}
\newtheorem{lemma}{Lemma}[section]

\newtheorem{definition}{Definition}

\newtheorem*{claim*}{Claim}

\newcommand{\eps}{\epsilon}

\newcommand{\card}[1]{\left\vert{#1}\right\vert}



\newcommand{\set}[1]{\ensuremath{\{ #1 \}}}

\newcommand{\REM}[1]{}

\newcommand{\RomanNumbers}[1]{%
\ifcase#1 
error!
\or 
i
\or
ii
\or
iii
\or
iv
\or
v
\else
error!
\fi
}

\newcommand\setst[2]{\ensuremath{ \{ #1 \mid} #2\ensuremath{\}}}
\def\BSet{\ensuremath{\mathscr B}}
\def\SSet{\ensuremath{\mathscr S}}

\newcommand{\edge}[1]{\ensuremath{(#1)}}
\newcommand{\match}[1]{\ensuremath{(#1)}}


\def\nSum{\ensuremath{n}}
\def\nMin{\ensuremath{n}}

\newcommand{\uB}[1]{\ensuremath{u_\BSet^{#1}}}

\newcommand{\uS}[1]{\ensuremath{u_\SSet^{#1}}}

\newcommand{\val}[1]{\ensuremath{val(#1)}}

\def\market{\ensuremath{G(\BSet, \SSet, E, val)}}


\newcommand{\MSt}[1]{\ensuremath{{\mathscr S}^{#1}}}

\newcommand{\MSPPt}[1]{\ensuremath{{\mathscr S}^{#1}(P^{#1},\Pi^{#1})}}
\newcommand{\MSPP}{\MSPPt{}}





\newcommand{\FSB}[1]{\ensuremath{#1}-feasible}


\newcommand{\PSS}[1]{property $(a_#1)$ of stable states}
\newcommand{\PSSs}[2]{properties $(a_#1)$ and $(a_#2)$ of stable states}
\newcommand{\PeSS}[1]{property $(a_#1)$ of \eps-stable states}
\newcommand{\PeSSs}[2]{properties $(a_#1)$ and $(a_#2)$ of \eps-stable states}

\newcommand{\shadow}[2]{\ensuremath{\bar{#1}_{#2}}}
\def\shadowB{\shadow{S}{B}}
\def\shadowS{\shadow{B}{S}}

\def\SWV{\texttt{SW}}

\newtheorem{clm}[theorem]{Claim}
\newtheorem{cor}[theorem]{Corollary}
\newtheorem{mechanism}{Mechanism}

\numberwithin{theorem}{section}



\renewcommand{\eps}{\ensuremath{\varepsilon}}

\newcommand{\ourinfo}[1]{Department of Computer and Information Science, University of Pennsylvania. \newline Email: \texttt{#1}. 
Supported in part by National Science Foundation grants CCF-1116961, CCF-1552909, and IIS-1447470.}

\title{Fast Convergence in the Double Oral Auction} 
\author{Sepehr Assadi\thanks{\ourinfo{\{sassadi,sanjeev,yangli2\}@cis.upenn.edu}}\and 
Sanjeev Khanna\footnotemark[1] \and 
Yang Li\footnotemark[1] \and 
Rakesh Vohra\thanks{Department of Economics and the Department of Electrical and System Engineering. \newline Email: \texttt{rvohra@seas.upenn.edu}. }
}

\date{}

\begin{document}
\maketitle
\thispagestyle{empty}
\begin{abstract}
A classical trading experiment consists of a set of unit demand buyers
and unit supply sellers with identical items. Each agent's value or
opportunity cost for the item is their private information and
preferences are quasi-linear. Trade between agents employs a double
oral auction (DOA) in which both buyers and sellers call out bids or
offers which an auctioneer recognizes. Transactions resulting from
accepted bids and offers are recorded. This continues until there are
no more acceptable bids or offers.  Remarkably, the experiment
consistently terminates in a Walrasian price. The main result of this
paper is a mechanism in the spirit of the DOA that converges to a
Walrasian equilibrium in a polynomial number of steps, thus providing
a theoretical basis for the above-described empirical phenomenon. It
is well-known that computation of a Walrasian equilibrium for this
market corresponds to solving a maximum weight bipartite matching
problem. The uncoordinated but rational responses of agents thus solve
in a distributed fashion a maximum weight bipartite matching problem
that is encoded by their private valuations. We show, furthermore,
that every Walrasian equilibrium is reachable by some sequence of
responses.  This is in contrast to the well known auction algorithms
for this problem which only allow one side to make offers and thus
essentially choose an equilibrium that maximizes the surplus for the
side making offers. Our results extend to the setting where not every
agent pair is allowed to trade with each other.
\end{abstract}

\clearpage
\setcounter{page}{1}

\section{Introduction}\label{sec:intro}

Chamberlin reported on the results of a market experiment in which
prices failed to converge to a Walrasian
equilibrium~\cite{Chamber1948}.  Chamberlin's market was an instance
of the assignment model with homogeneous goods. There is a set of unit
demand buyers and a set of unit supply sellers, and all items are
identical. Each agent's value or opportunity cost for the good is
their private information and preferences are quasi-linear.
Chamberlin concluded that his results showed competitive theory to be
inadequate. Vernon Smith, in an instance of insomnia, recounted in~\cite{Smith1991} demurred:\\
``The thought occurred to me that the idea of doing an experiment was
right, but what was wrong was that if you were going to show that
competitive equilibrium was not realizable $\ldots$ you should choose
an institution of exchange that might be more favorable to yielding
competitive equilibrium. Then when such an equilibrium failed to be
approached, you would have a powerful result. This led to two ideas:
(1) why not use the double oral auction procedure, used on the stock
and commodity exchanges? (2) why not conduct the experiment in a
sequence of trading days in which supply and demand were renewed to
yield functions that were daily flows?''

Instead of Chamberlin's unstructured design, Smith used a double oral
auction (DOA) scheme in which both buyers and sellers call out bids or
offers which an auctioneer recognizes~\cite{Smith1962}. Transactions
resulting from accepted bids and offers are recorded. This continues
until there are no more acceptable bids or offers.  At the conclusion
of trading, the trades are erased, and the market reopens with
valuations and opportunity costs unchanged. The only thing that has
changed is that market participants have observed the outcomes of the
previous days trading and may adjust their expectations accordingly.
This procedure was iterated four or five times. Smith was astounded:
``I am still recovering from the shock of the experimental
results. The outcome was unbelievably consistent with competitive
price theory.''  \cite{Smith1991}(p. 156)

As noted by Daniel Friedman~\cite{Friedman1993}, the results
in~\cite{Smith1962}, replicated many times, are something of a
mystery. How is it that the agents in the DOA overcome the impediments
of both \emph{private information} and \emph{strategic uncertainty} to
arrive at the Walrasian equilibrium? A brief survey of the various
(early) theoretical attempts to do so can be found in Chapter 1
of~\cite{Friedman1993}.  Friedman concluded his survey of the
theoretical literature with a two-part conjecture.  ``First, that
competitive (Walrasian) equilibrium coincides with ordinary (complete
information) Nash Equilibrium (NE) in interesting environments for the DOA
institution. Second, that the DOA promotes some plausible sort of
learning process which eventually guides the \emph{both clever and
  not-so-clever} traders to a behavior which constitutes an `as-if'
complete-information NE.''

Over the years, the first part of Friedman's conjecture has been well
studied (see, e.g.,~\cite{Demange1986}; see also
Section~\ref{sec:ss-sw}) but the second part of the conjecture is
still left without a satisfying resolution.  The focus of this paper
is on the second part of Friedman's conjecture.  More specifically, we
design a mechanism which simulates the DOA, and prove that this
mechanism always converges to a Walrasian equilibrium in polynomially
many steps.  Our mechanism captures the following four key properties
of the DOA.

\begin{enumerate}
\item \label{prop:two-side} \emph{Two-sided market:} Agents on either
  side of the market can make actions.
\item \label{prop:private} \emph{Private information:} When making
  actions, agents have no other information besides their own
  valuations and the bids and offers submitted by others.
\item \label{prop:uncertain} \emph{Strategic uncertainty:} The agents
  have the freedom to choose their actions modulo mild rationality
  conditions.
  \item \label{prop:recognize} \emph{Arbitrary recognition:} The
    auctioneer (only) recognizes bids and offers in an
    \emph{arbitrary} order.
\end{enumerate}

Among these four properties, mechanisms that allow agents on either
side to make actions (\emph{two-sided market}) and/or limit the
information each agent has (\emph{private information}) have received
more attention in the literature (see
Section~\ref{sec:results}). However, very little is known for
mechanisms that both work for strategically uncertain agents and
recognize agents in an arbitrary order. Note that apart from resolving
the second part of Friedman's conjecture, having a mechanism with
these four properties itself is of great interest for multiple
reasons. First, in reality, the agents are typically unwilling to
share their private information to other agents or the
auctioneer. Second, agents naturally prefer to act freely as oppose to
being given a procedure and merely following it. Third, in large scale
distributed settings, it is not always possible to find a real
auctioneer who is trusted by every agent, and is capable of performing
massive computation on the data collected from all agents. In the DOA
(or in our mechanism) however, the auctioneer only recognizes actions
in an arbitrary order, which can be replaced by any standard
distributed token passing protocol, where an agent can take an action
only when he is holding the token. In other words, our mechanism
serves more like a platform (rather than a specific protocol) where
rational agents always reach a Walrasian equilibrium no matter their
actual strategy. To the best of our knowledge, no previous mechanism
enables such a `platform-like' feature. In the rest of this section,
we summarize our results and discuss in more detail the four
properties of the DOA in context of previous work.

\subsection{Our Results and Related Work}\label{sec:results}

We design a mechanism that simulates the DOA by simultaneously
capturing two-sided market, private information, strategic
uncertainty, and arbitrary recognition.  More specifically, following
the DOA, at each iteration of our mechanism, the auctioneer maintains
a list of active price submission and a tentative assignment of buyers
to sellers that `clears' the market at the current prices (note that
this can also be distributedly maintained by the agents
themselves). Among the agents who wish to make or revise an earlier
submission, an arbitrary one is recognized by the auctioneer and a new
tentative assignment is formed.  An agent can submit \emph{any} price
that strictly improves his payoff given the current submissions
(rather than being forced to make a `best' response, which is to
submit the price that maximizes payoff).  We show that as long as
agents make myopically better responses, the market always converges
to a Walrasian equilibrium in polynomial number of steps.
Furthermore, \emph{every} Walrasian equilibrium is the limit of some
sequence of better responses.  We should remark that the fact that an
agent always improves his payoff does not imply that the total payoff
of all agents always increases. For instance, a buyer can
increase his payoff by submitting a higher price and `stealing' the
current match of some other buyer (whose payoff would drop).

To the best of our knowledge, no existing mechanism captures all four
properties for the DOA that we proposed in this paper.  For most of
the early work on auction based algorithms
(e.g.,~\cite{ShapleyS1971,Kuhn10,Crawford1981,Demange1986,Bertsekas1979}),
unlike the DOA, only one side of the market can make offers. By
permitting only one side of the market to make offers, the auction
methods essentially pick the Walrasian equilibrium (equilibria are not
unique) that maximizes the total surplus of the side making the
offers.

For two-sided auction based
algorithms~\cite{BertsekasC1992,LinearBook}, along with the `learning'
based algorithms studied more recently~\cite{nax2013,kanoria2011},
agents are required to follow a specific algorithm (or protocol) that
determines their actions (and hence violates strategic
uncertainty). For example, \cite{BertsekasC1992} requires that when an
agent is activated, a buyer always matches to the `best' seller and a
seller always matches to the `best' buyer (i.e., agents only make
myopically \emph{best} responses, which is not the case for the DOA).
\cite{kanoria2011} has agents submit bids based on their current best
alternative offer and prices are updated according to a common formula
relying on knowledge of the agents opportunity costs and marginal
values.  \cite{nax2013}, though not requiring agents to always make
myopically best responses, has agents follow a specific (randomized)
algorithm to submit conditional bids and choose matches.  We should
emphasize that agents acting based on some \emph{random} process is
different from agents being strategically uncertain. In particular, for
the participants of the original DOA experiment (of~\cite{Smith1962}),
there is no a priori reason to believe that they were following some
specific random procedure during the experiment. On the contrary, as
stated in Friedman's conjecture, there are \emph{clever and
  not-so-clever} participants, and hence different agents could have
completely different strategies and their strategies might even change
when, for instance, seeing more agents matching with each other, or by
observing the strategies of other agents.  Therefore, analyzing a
process where agents are strategically uncertain can be distinctly
more complex than analyzing the case where agents behave in accordance
with a well-defined stochastic process.  In this paper, we consider an
extremely general model of the agents: the agents are acting
arbitrarily while only following some mild rationality conditions.
Indeed, proving fast convergence (or even just convergence) for a
mechanism with agents that are strategically uncertain is one of the
main challenges of this work.

Arbitrary recognition is another critical challenge for designing our
mechanism.  For example, the work of~\cite{Pradelski2015,nax2013}
deploys randomization in the process of recognizing agents. This is
again in contrast to the original DOA experiment, since the auctioneer
did not use a randomized procedure when recognizing actions, and it is
unlikely that the participants \emph{decide} to make an action
following some random process (in fact, some participants might be
more `active' than others, which could lead to the `quieter'
participants barely getting \emph{any} chance to make actions, as long
as the `active' agents are still making actions).

The classical work on the \emph{stable matching}
problem~\cite{Gale1962stable-matching} serves as a very good
illustration for the importance of arbitrary recognition.
Knuth~\cite{knuth1976} proposed the following algorithm for finding a
stable matching. Start with an arbitrary matching; if it is stable,
stop; otherwise, pick a blocking pair and match them; repeat this
process until a stable matching is found.  Knuth showed that the
algorithm could cycle if the blocking pair is picked
\emph{arbitrarily}. Later, \cite{Rrandom} showed that picking the
blocking pairs at random suffices to ensure that the algorithm
eventually converges to a stable matching, which suggests that it is
the arbitrary selection of blocking pairs that causes Knuth's
algorithm to not converge.

The setting of Knuth's algorithm is very similar to the process of the
DOA in the sense that in any step of the DOA, a temporary matching is
maintained and agents can make actions to (possibly) change the
current matching. But perhaps surprisingly, we show that arbitrary
recognition does not cause the DOA to suffer from the same cycling
problem as Knuth's algorithm. The main reason, or the main difference
between the two models is that our assignment model involves both
matching and prices, while Knuth's algorithm only involves
matchings. As a consequence, in our mechanism, the preferences of the
agents change over time (since an agent always favors the better price
submission, the preferences could change when new prices are
submitted). In the instance that leads Knuth's algorithm to cycle
(see~\cite{knuth1976}), the fundamental cause is that the preferences
of \emph{all} agents form a cycle.  However, in our mechanism,
preferences (though changing) are always consistent for all agents.

Based on this observation, we establish the limit of the DOA by
introducing a small friction into the market: restricting the set of
agents on the other side that each agent can trade with\footnote{In
  Chamberlin's experiment, buyers and sellers had to seek each other
  out to determine prices. This search cost meant that each agent was
  not necessarily aware of all prices on the other side of the
  market.}. We show that in this case, there is an instance with a
specific adversarial order of recognizing agents such that following
this order, the preferences of the agents (over the entire order) form
a cycle and the DOA may never converge.  Finally, we complete the
story by showing that if we change the mechanism to recognize agents
randomly, with high probability, a Walrasian equilibrium will be
reached in polynomial number of steps.  This further emphasizes the
distinction between random recognition and arbitrary recognition for
DOA-like mechanisms.

\medskip
\noindent
{\bf Organization:} The rest of the paper is organized as follows. In
Section~\ref{sec:preliminary}, we formally introduce the model of the
market and develop some concepts and notation used throughout the
paper. Section~\ref{sec:ss-sw} establishes a connection between the
stable states of the market and social welfare.
Our main results are presented in Section~\ref{sec:cvg}. We describe
our DOA style mechanism and show that in markets with no trading
restrictions, it converges in a number of steps that is polynomially
bounded in the number of agents. We then show that when each agent is
restricted to trade only with an arbitrary subset of agents on the
other side, the mechanism need not converge. A randomized variant of
our mechanism is then presented to overcome this issue.  Finally, we
conclude with some directions for future work in
Section~\ref{sec:conclusion}.

\section{Preliminaries}
\label{sec:preliminary}

We will use the terms `player' and `agent' interchangeably throughout
the paper. We use $B$ to represent a buyer, $S$ for a seller, and $Z$
for either of them. Also, $b$ is used as the bid submitted by a buyer
and $s$ as the offer from a seller.

\begin{definition}[Market]
A market is denoted by \market, where \BSet\ and \SSet\ are the sets of
buyers and sellers, respectively. Each buyer $B \in \BSet$ is endowed
with a valuation of the item, and each seller $S \in \SSet$ has an
opportunity cost for the item. We slightly abuse the terminology and
refer to both of these values as the valuation of the agent for the
item. The valuation of any agent $Z$ is chosen from range $[0,1]$, and
denoted by $\val{Z}$.  Finally, $E$ is the set of undirected edges
between \BSet\ and \SSet, which determines the buyer-seller pairs that
may trade.
\end{definition}

Let $m = \card{E}$ and $n = \card{\BSet} + \card{\SSet}$.

\begin{definition}[Market State]
The \emph{state} of a market at time $t$ is denoted 
\MSPPt{t}\ (\MSPP\ for short, if time is clear or not relevant), where
$P$ is a \emph{price function} revealing the price submission of each
player and $\Pi$ is a \emph{matching} between \BSet\ and \SSet,
indicating which players are currently paired.  In other words, the
bid (offer) of a buyer $B$ (seller $S$) is $P(B)$ ($P(S)$), and $B$,
$S$ are paired in $\Pi$ iff $\match{B,S} \in \Pi$.  In addition, we
denote a player $Z \in \Pi$ iff $Z$ is matched with some other player
in $\Pi$, and denote his match by $\Pi(Z)$.

Furthermore, the state where each buyer submits a bid of $0$, each
seller submits an offer of $1$, and no player is matched is called the
\emph{zero-information state}.
\end{definition}

We use the term zero-information because no player reveals 
non-trivial information about his valuation in this state.

\begin{definition}[Valid State]
A state is called \emph{valid} iff $(a_1)$ the price submitted by each
buyer (seller) is lower (higher) than his valuation, $(a_2)$ two
players are matched only when there is an edge between them, and
$(a_3)$ for any pair in the matching, the bid of the buyer is no
smaller than the offer of the seller.
\end{definition}

In the following, we restrict attention to states that are valid.

\begin{definition}[Utility]
\label{def:utl}
For a market \market at state \MSPP, the \emph{utility} of a buyer is
defined as $\val{B} - P(B)$, if $B$ receives an item, and zero
otherwise. Similarly, the utility of a seller is defined as $P(S) -
\val{S}$, if $S$ trades his item, and zero otherwise.
\end{definition}

Note that what we have called utility is also called surplus.

\begin{definition}[Stable State]
\label{def:NE}
A \emph{stable state} of a market \market\ is a state
\MSPP\ s.t. $(a_1)$ for all $\edge{B,S} \in E$, $P(B) \le P(S)$ $(a_2)$
if $Z \notin \Pi$, then $P(Z) = \val{Z}$, and $(a_3)$ if $\match{B,S} \in
\Pi$, then $P(B) = P(S)$. 
\end{definition}

Suppose $\MSPP$ is not stable. Then, one of the following must be true.
\begin{enumerate}
\item There exists $(B,S) \in E$ such that $P(B) > P(S)$. Then, both
  $B$ and $S$ could strictly increase their utility by trading with
  each other using the average of their prices.
\item There exists $Z \not \in \Pi$ such that $P(Z) \neq val(Z)$. This
  agent could raise his bid (if a buyer) or lower his offer (if a
  seller), without reducing his utility and having a better
  opportunity to trade. 
\item There exists $(B,S) \in \Pi$ such that $P(B) > P(S)$ ($P(B) <
  P(S)$ results in an invalid state). One of the agents could do
  better by either raising his offer or lowering his bid.
\end{enumerate}

\begin{definition}[$\eps$-Stable State]
  For any $\eps \geq 0$, a state \MSPP\ of a market \market\ is
  $\eps$-stable iff $(a_1)$ for any $\edge{B,S} \in E$, $P(B) -
  P(S) \le \eps$ $(a_2)$ if player $Z \notin \Pi$, $P(Z) = \val{Z}$,
  and $(a_3)$ if $\match{B,S} \in \Pi$, $P(B) = P(S)$.
\end{definition}
Note that the only difference between a stable state and an
$\eps$-stable state lies in the first property. At any $\eps$-stable
state, no matched player will have a move to increase his utility by
more than $\eps$.

\begin{definition}[Social Welfare]
For a market \market\ with a matching $\Pi$, the social welfare (SW)
of this matching is defined as the sum of the valuation of the matched
buyers minus the total opportunity cost of the matched sellers.  We
denote by $\SWV_\Pi$ the SW of matching $\Pi$. 
\end{definition}

\begin{definition}[$\eps$-approximate SW]
For any market, a matching $\Pi$ is said to give an $\eps$-approximate
SW if $\SWV_\Pi \ge \SWV_{\Pi^*} - \nMin \eps$ for any $\Pi^*$ that
maximizes SW. In other words, on average, the social welfare collected
from each player using $\Pi$ is at most $\eps$ less than that
collected using $\Pi^*$.
\end{definition}

\section{Stable State and Social Welfare}
\label{sec:ss-sw}
In this section we mainly establish the connection between stables states and social
welfare in the market. We emphasize that most results in this section are well known in
the literature and stated here for the sake of completeness.

The problem of finding a matching that maximizes SW can be formulated
as a linear program (LP) (see~\cite{LinearBook} for example). For any
edge $\edge{B,S} \in E$, let $x_{B,S}$ be the variable indicating
whether \edge{B,S}\ is selected in the matching or not, and define
weight of the edge, $w_{B,S} = \val{B} - \val{S}$. Therefore, the LP
(primal) and its dual can be defined as follows.
\begin{align*}
&\max{\sum_{\edge{B,S}\in E}  w_{B,S} \cdot x_{B,S}}
  & &\min{\sum_{ B\in \BSet}  y_{B} + \sum_{S\in \SSet}  y_{S}}\\
\text{s.t. } &\forall B^* \in \BSet, \sum_{\edge{B^*,S} \in E}x_{B^*,S} \le 1
 & \text{s.t. } &\forall \edge{B,S} \in E, y_B + y_S \ge w_{B,S}\\
&\forall S^* \in \SSet, \sum_{\edge{B,S^*} \in E}x_{B,S^*} \le 1
& & y_{B}, y_{S} \ge 0\\
& x_{B,S} \ge 0
\end{align*}
In the following, we will refer to the above linear programs as
`primal' and `dual', respectively.  The dual variables $y$ can be
interpreted as the utilities that agents enjoy assuming every buyer
gets an item and every seller sells the item. Since it only depends on
the price function, we call this price-wise utility.  The constraint
$y_B + y_S \ge w_{B,S}$ essentially states that the sum of the
utilities obtained by $(B,S)$ must be at least as large as their gains
from trade.

We use $x^\Pi$ to denote the characteristic function of matching
$\Pi$, i.e., $x^\Pi _{B,S} = 1$ iff $\match{B,S} \in \Pi$, and use
$y^P$ to denote the price-wise utility function of a price function
$P$, i.e., $y^P_B = \val{B} - P(B)$ and $y^P_S = P(S)- \val{S}$.  It
is well known that SW is maximized at a Walrasian equilibrium
(see~\cite{LinearBook}) and we state here a similar result for stable
states (a simple proof can be found in Appendix~\ref{sec:ss-msw}).

\begin{theorem}\label{thm:stable-pd}
A state \MSPP\ is stable iff $x^\Pi$ is an optimal solution for
the primal and $y^P$ is an optimal solution for the dual.
\end{theorem}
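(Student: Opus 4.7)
The plan is to prove Theorem~\ref{thm:stable-pd} by invoking LP duality together with complementary slackness, translating each stability condition into either a feasibility constraint or a slackness equation. The key observation is that the three stability conditions $(a_1)$, $(a_2)$, $(a_3)$ correspond exactly to dual feasibility, dual complementary slackness, and primal complementary slackness, respectively.

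For the forward direction, assume $\MSPP$ is stable. I would first verify primal feasibility of $x^\Pi$: since $\Pi$ is a matching, each buyer and each seller appears in at most one edge, and the coordinates are $0/1$. Next, I would verify dual feasibility of $y^P$. The edge constraint $y^P_B + y^P_S \ge w_{B,S}$ rewrites as $\val{B} - P(B) + P(S) - \val{S} \ge \val{B} - \val{S}$, i.e.\ $P(S) \ge P(B)$, which is precisely condition $(a_1)$ of stability. Nonnegativity $y^P_B, y^P_S \ge 0$ follows from the validity assumption on $\MSPP$ (property $(a_1)$ of valid states, which forces $P(B) \le \val{B}$ and $P(S) \ge \val{S}$). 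Finally, I would check complementary slackness: (i) primal CS requires that whenever $x^\Pi_{B,S} = 1$, the dual constraint is tight, which reduces to $P(B) = P(S)$, exactly $(a_3)$; (ii) dual CS requires that if $y^P_Z > 0$ then the primal constraint on $Z$ is tight, whose contrapositive says that unmatched agents have $y^P_Z = 0$, i.e.\ $P(Z) = \val{Z}$, exactly $(a_2)$. Feasibility plus complementary slackness then implies optimality of both $x^\Pi$ and $y^P$.

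For the reverse direction, assume $x^\Pi$ is optimal for the primal and $y^P$ is optimal for the dual. Dual feasibility of $y^P$ gives $P(S) \ge P(B)$ on every edge, which is $(a_1)$. Since both solutions are optimal (and the primal has a finite optimum), strong duality guarantees that the complementary slackness conditions hold. Primal CS then yields $P(B) = P(S)$ for every matched pair, i.e.\ $(a_3)$, and dual CS yields $P(Z) = \val{Z}$ for every unmatched $Z$, i.e.\ $(a_2)$. Hence $\MSPP$ is stable.

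I do not anticipate a serious obstacle here; the only subtlety is ensuring that $y^P$ is genuinely a dual feasible solution, which requires the nonnegativity $y^P_Z \ge 0$ that is not stated in Definition~\ref{def:NE} itself but is inherited from the underlying valid-state assumption (prices of buyers lie below their valuations, prices of sellers above). I would therefore make this inheritance explicit at the start of the proof so that the rest becomes a clean dictionary between $(a_1), (a_2), (a_3)$ on one side and dual feasibility, dual CS, primal CS on the other.
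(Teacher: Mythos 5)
Your proposal is correct and follows essentially the same route as the paper: both directions reduce the three stability conditions to dual feasibility, primal complementary slackness, and dual complementary slackness, with nonnegativity of $y^P$ supplied by the validity of the state. The only difference is presentational --- the paper does not cite the complementary slackness theorem by name but instead verifies directly (via weak duality) that the primal and dual objective values coincide, which amounts to the same computation.
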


Theorem~\ref{thm:stable-pd} states that any stable state maximizes SW.
In other words, a stable state is a Walrasian equilibrium of the
market. Moreover, \emph{any} pair of optimal primal and dual solutions
can form a stable state.
We now show that for a sufficiently small $\eps$, an $\eps$-stable state
is almost as good as stable states in terms of achieving maximum SW.
We defer the proof of the following theorem to Appendix~\ref{sec:proof-esw}.

\begin{theorem}\label{thm:MSW-eSS}
For any market \market, for any $\eps > 0$, any $\eps$-stable state
realizes an $\eps$-approximate SW. Moreover, if we define $\delta =
\min \set{|val(Z_1) - val(Z_2)| \mid Z_1,Z_2 \in \BSet \cup \SSet ,
  val(Z_1) \neq val(Z_2)}$, then for $0 \le \eps <\delta/\nMin$,
any $\eps$-stable state maximizes SW.
\end{theorem}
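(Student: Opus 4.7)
The plan is to use LP duality to compare the $\eps$-stable matching to the SW optimum. Given an $\eps$-stable state $\MSPP$, I would form the price-wise utility vector $y^P$ with $y^P_B = \val{B} - P(B)$ and $y^P_S = P(S) - \val{S}$. Validity forces $y^P \ge 0$, but property $(a_1)$ of $\eps$-stability only gives $y^P_B + y^P_S = w_{B,S} - (P(B) - P(S)) \ge w_{B,S} - \eps$, so $y^P$ is $\eps$-infeasible for the dual. I would repair this by setting $\tilde y_B = y^P_B + \eps$ for every buyer and $\tilde y_S = y^P_S$ for every seller; then $\tilde y \ge 0$ and $\tilde y_B + \tilde y_S \ge w_{B,S}$, so $\tilde y$ is dual feasible.

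Next I would evaluate the dual objective at $\tilde y$. Using $(a_2)$---unmatched agents have $y^P_Z = 0$---and $(a_3)$---for a matched pair $(B,S) \in \Pi$ we have $P(B) = P(S)$ and hence $y^P_B + y^P_S = w_{B,S}$---the unshifted sum telescopes exactly to $\sum_Z y^P_Z = \SWV_\Pi$. Thus the dual objective at $\tilde y$ is $\SWV_\Pi + \nBSet \cdot \eps$, and weak LP duality yields $\SWV_{\Pi^*} \le \SWV_\Pi + \nBSet \cdot \eps \le \SWV_\Pi + n\eps$, which is the claimed $\eps$-approximate guarantee.

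For the second part, I would analyze the symmetric difference $\Pi^* \triangle \Pi$, which decomposes into vertex-disjoint alternating paths and cycles. A direct calculation shows that every alternating cycle contributes $0$ to $\SWV_{\Pi^*} - \SWV_\Pi$ (interior valuations cancel in pairs), and every alternating path contributes exactly a single signed valuation difference of the form $\val{Z_1} - \val{Z_2}$ determined by its two endpoints (a short case split on the parity of the path length and on whether each endpoint is a buyer or seller verifies the telescoping on the interior vertices). Crucially, optimality of $\Pi^*$ forces the contribution of each individual path to be nonnegative, because swapping the $\Pi^*$-edges and the $\Pi$-edges along that single path produces another valid matching whose SW equals $\SWV_{\Pi^*}$ minus that path's contribution.

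It follows that each path contributes either $0$ or at least $\delta$ to $\SWV_{\Pi^*} - \SWV_\Pi$, so the whole gap is either $0$ or at least $\delta$. Combined with the first part, when $0 \le \eps < \delta/n$ we get $\SWV_{\Pi^*} - \SWV_\Pi \le n\eps < \delta$, forcing the gap to be $0$ and hence $\Pi$ to be SW-maximal. The step that needs the most care is the alternating-path case analysis: the per-path nonnegativity (guaranteed by the optimality of $\Pi^*$) is precisely what prevents cancellations \emph{across} components from producing a nonzero total strictly smaller than $\delta$, an issue that would otherwise make the $n\eps < \delta$ bound insufficient to force exact optimality.
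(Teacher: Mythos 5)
Your proof is correct, and while the first half mirrors the paper's argument, the second half takes a genuinely different route. For the $\eps$-approximation, the paper perturbs the LP itself (an ``$\eps$-primal/$\eps$-dual'' pair with weights $w_{B,S}-\eps$) and checks that $y^P$ is feasible there; you instead shift the dual variables by $\eps$ on the buyer side so that they become feasible for the \emph{original} dual. These two devices are interchangeable, and your version even yields the marginally sharper bound $\card{\BSet}\,\eps$ in place of $n\eps$. For exact optimality when $\eps<\delta/\nMin$, both arguments decompose the union of $\Pi$ and $\Pi^*$ into alternating paths and cycles and reduce each component's contribution to a single valuation difference between the path's endpoints; but at that point the paper invokes $\eps$-stability a second time, telescoping $P(B_{i+1})=P(S_i)\ge P(B_i)-\eps$ along the path to bound the endpoint gap by $(k+1)\eps\le \nMin\eps<\delta$ and derive a contradiction. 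You never touch the prices again: you use optimality of $\Pi^*$ to get \emph{per-component nonnegativity} of the SW gap, so the total gap is a sum of terms each equal to $0$ or at least $\delta$, and the global $\nMin\eps<\delta$ bound already established in the first half forces every term to vanish. Your route is precisely the correct repair of the naive ``any nonzero SW gap between two matchings is at least $\delta$'' argument that the paper explicitly warns against with its four-player example: that example fails because one of the two matchings being compared is not optimal, so a component can contribute negatively and cross-component cancellation can shrink the gap below $\delta$; the per-component nonnegativity you supply is exactly what rules this out. The paper's version buys a self-contained second half (it never relies on the approximation bound), while yours is shorter and cleanly isolates the single place where optimality of $\Pi^*$ is used.
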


We note that~\cite{Demange1986} also shows that a $\eps$-stable state
realizes an $\eps$-approximate SW. However, the bound on $\eps$ given
in Theorem~\ref{thm:MSW-eSS} is new.  In~\cite{LinearBook}, using
$\eps$-complementary slackness, Bertsekas shows that for integer
valuations, any $\eps$-stable state achieves maximum SW if $\eps <
1/n$. Therefore, for fractional valuations, by scaling valuations with
a suitably large factor $L$, one can make the valuations integers, and
deduce that $\eps < 1/(n L)$ suffices for achieving maximum SW.  Note
that $L$ is at least $1/\delta$ but can possibly be much larger.

We should point out that the bound $\eps < \delta/\nMin$ is not an
immediate consequence of the fact that any matching in an
$\eps$-stable state is an $\eps$-approximate SW, by arguing that the
smallest non-zero difference in SW of two matchings is at least
$\delta$. Consider a market whose trading graph is a complete
bipartite graph, with four players, where $\val{B_1} = 0.1$,
$\val{S_1} = 0.05$ $\val{B_2} = 0.2001$, $\val{S_2} = 0.15$. The
difference of valuation price between any two players is lower bounded
by $0.05$ ($\delta = 0.05$) but $B_1, S_1$ yields a SW of $0.05$ and
$B_2, S_2$ yields a SW of $0.0501$ and the difference in SW could be
made arbitrarily small.

Finally, it is worth mentioning that the fact that $\eps$-stable state
gives $\eps$-approximate SW does have a corollary as follows, which is
a weaker result compared to Theorem~\ref{thm:MSW-eSS}: If for any
$\edge{B,S} \in E$, $\val{B} - \val{S}$ is an integer multiple of
$\delta$, then for any $0 \le \eps < \delta/\nMin$, an $\eps$-stable
state always maximizes SW.

\section{Convergence to a Stable State}
\label{sec:cvg}

We establish our main results in this section.  We will start by
describing a mechanism in the spirit of DOA, and show that for any
\emph{well-behaved} stable state, there is a sequence of agent moves
that leads to this state.  When the trading graph is a complete
bipartite graph, i.e, the case of the DOA expriment, we show that convergence to a stable state occurs in
number of steps that is polynomially bounded in the number
agents. However, convergence to a stable state is not
guaranteed when the trading graph is an incomplete bipartite graph.  We
propose a natural randomized extension of our mechanism, and show that
with high probability, the market will converge to a stable state in
number of steps that is polynomially bounded in the number of agents.

\subsection{The Main Mechanism}
\label{subsec:main_mech}

To describe our mechanism, we need the notion of an
\emph{$\eps$-interested} player.

\begin{definition}[$\eps$-Interested Player]
\label{def:interested}
For a market at state \MSPP~with any parameter $\eps > 0$, a seller
$S$ is said to be \emph{\eps-interested} in his neighbor $B$ iff
either $(a)$ $P(B) \ge P(S)$ and $S \notin \Pi$, or $(b)$ $P(B) - P(S) \ge
\eps$ and $S \in \Pi$. The set of buyers interested in a seller
$S$ is defined analogously.
\end{definition}

When the parameter $\eps$ is clear from the context, we will simply
refer to an \emph{\eps-interested} player as an interested player.

\begin{mechanism}\label{mech:overbid}
(with input parameter $\eps > 0$)
\begin{itemize}
\item \textbf{Activity Rule:} Among the unmatched buyers, any buyer
  that neither submits a new higher bid nor has a seller that is
  interested in him, is labeled as inactive. All other unmatched
  buyers are labeled as active. An active (inactive) seller is defined
  analogously. An inactive player changes his status iff some player
  on the other side matches with him.\footnote{This is common for
    eliminating no trade equilibria.}
\item \textbf{Minimum Increment:} 
  Each submitted price must be an integer multiple of $\eps$.
  \footnote{This is part of many experimental implementations of the
    DOA.}
\item \textbf{Recognition:} Among all active players, an arbitrary one
  is recognized.
\item \textbf{Matching:} After a buyer $B$ is recognized, $B$ will
  choose an interested seller to match with if one exists. If the
  offer of the seller is lower than the bid $b$, it is immediately
  raised to $b$. The seller action is defined analogously.
\item \textbf{Tie Breaking:} When choosing a player on the other side
  to match to, an unmatched player is given priority (the {\em
    unmatched first} rule).
\end{itemize}
\end{mechanism}

In each iteration, players are partitioned into two sets based on
whether they are matched or not.  The unmatched players are further
partitioned into active players and inactive players. The only players
with a myopic incentive to revise their submissions are those that are
not matched.

Observe that since a buyer will never submit a bid higher than his
valuation, and a seller will never make an offer below his own
opportunity cost, by submitting only prices that are integer multiples
of $\eps$, an agent might not be able to submit his true valuation.
However, since an agent can always submit a price at most $\eps$ away
from the true valuation, if we pretend that the `close to valuation'
prices are true valuations, the maximum SW will decrease by at most
$n\eps$. By picking $\eps'=\eps/2$, if the market converges to an
$\eps'$-stable state, we still guarantee that the SW of the final
state is at most $n\eps$ away from the maximum SW.

When a buyer $B$ chooses to increase his current bid: if $s$ denotes
the lowest offer in the neighborhood of $B$, and $s'$ denotes the
lowest offer of any unmatched seller in the neighborhood of $B$, then
the new bid of $B$ can be at most $\min\set{s + \eps, s'}$. We refer
to this as the {\em increment} rule. This may be viewed as a
consequence of rationality -- there is no incentive for a buyer to bid
above the price needed to make a deal with some seller. A similar rule
applies to sellers. With a slight abuse of the terminology, we call
either rules increment rule. Notice, a player indifferent between
submitting a new price and keeping his price unchanged will be assumed
to break ties in favor of activity.

Note that the role of the auctioneer in Mechanism~(\ref{mech:overbid})
is restricted to recognize agent actions, but never select actions for
agents. In fact, the existence of an auctioneer is not even necessary
for the mechanism to work. Minimum increment can be interpreted as
setting the currency of the market to be $\eps$. Arbitrary
recognition can be achieved by a first come, first served
principle. Activity rule and matching are both designed to ensure that
players will keep making actions (submitting a new price or forming a
valid match) if one exists.

We first prove some properties of Mechanism (\ref{mech:overbid}).

\begin{clm}\label{clm:always-estable-1-3}
For any market, if we use Mechanism (\ref{mech:overbid}) with any
input parameter $\eps > 0$ and start from any state that satisfies
\PeSSs{1}{3}, any state reached satisfies \PeSSs{1}{3}.
\end{clm}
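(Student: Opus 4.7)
The plan is to proceed by induction on the number of iterations of Mechanism~(\ref{mech:overbid}). The base case holds by hypothesis, so the task is the inductive step: assuming the current state satisfies \PeSSs{1}{3}, show the same after one action by the recognized active player. I would enumerate the four possible action types dictated by the activity and matching rules: (i) a buyer raises his bid, (ii) a seller lowers his offer, (iii) a buyer matches with an interested seller, (iv) a seller matches with an interested buyer. Cases (i) and (iii) will be handled explicitly, with (ii) and (iv) following by symmetry.

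For case (i), if buyer $B$ raises his bid to a new value $b$, then the increment rule forces $b \le s + \eps$, where $s$ is the smallest offer among $B$'s neighbors. Hence $b - P(S') \le \eps$ for every neighbor $S'$ of $B$, which preserves $(a_1)$ on all edges incident to $B$; edges not incident to $B$ are unaffected since no other price changes. Since bid-raising is an option only for unmatched buyers, $B$ belongs to no pair of $\Pi$ and $(a_3)$ is vacuously preserved for him; other matched pairs are untouched.

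For case (iii), if $B$ matches with an interested seller $S$, the matching rule sets $P(S) \gets \max\{P(S), b\}$ with $b = P(B)$, giving $P(B) = P(S)$ on the new pair and hence $(a_3)$ for it; any other matched pair is left alone. The only price that changes is $P(S)$, which only weakly increases, so for every neighbor $B''$ of $S$ we still have $P(B'') - P(S) \le \eps$, and edges not incident to $S$ are unchanged. The one delicate point I anticipate is the former partner of $S$: if $(B', S) \in \Pi$ just before the move, that pair is broken and I must still verify $(a_1)$ on the edge $(B', S)$ itself. Here I would invoke the previous-step $(a_3)$ to obtain $P(B') = P_{\text{old}}(S) \le P(S_{\text{new}})$, so $P(B') - P(S_{\text{new}}) \le 0 \le \eps$. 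I expect this former-partner bookkeeping to be the main (and essentially only) thing requiring care; every other verification is immediate from the increment, matching, and activity rules.
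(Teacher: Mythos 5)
Your proof is correct and follows the same reasoning the paper uses, which it compresses into a single sentence ("by the increment rule and matching rule respectively"): the increment rule preserves $(a_1)$ when a price is submitted, and the matching rule's price-equalization preserves $(a_3)$, with price changes from matching only moving in the direction that keeps $(a_1)$ slack. Your explicit handling of the displaced former partner (using the old $(a_3)$ equality to recover $(a_1)$ on the broken edge) is exactly the right bookkeeping and is the one detail the paper leaves implicit.
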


By the increment rule and matching rule respectively, the reached
state satisfies \PeSSs{1}{3}.

If a state \MSPP~satisfies $\forall \edge{B,S} \in E, P(B) \le P(S)$,
then we call it a {\em valid starting state}. Note that a valid
starting state satisfies \PeSSs{1}{3} (a valid $\Pi$ matches a buyer
to a seller only if the bid price of the buyer is at least the offer
price of the seller). In the following, we only consider markets that
begin with a valid starting state, and hence a matched player will
never have a move to increase his utility by more than $\eps$.

\begin{clm}
For any market, if we use Mechanism (\ref{mech:overbid}) and begin
with a valid starting state, then any final state of the market is
$\eps$-stable.
\end{clm}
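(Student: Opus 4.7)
\medskip
\noindent
\emph{Proof plan.}
My plan is to first observe that properties $(a_1)$ and $(a_3)$ of $\eps$-stability come for free from Claim~\ref{clm:always-estable-1-3}: a valid starting state trivially satisfies $(a_1)$, since $P(B) \leq P(S)$ along every edge implies $P(B) - P(S) \leq \eps$, and it satisfies $(a_3)$ by definition of a valid matching. So the entire content of the claim reduces to verifying property $(a_2)$ at the terminal state: every unmatched player $Z$ has $P(Z) = \val{Z}$.

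I would argue this by contradiction. Suppose in a final state some unmatched buyer $Z$ has $P(Z) < \val{Z}$ (the seller case is symmetric). Because the mechanism has halted, no player is active; in particular, by the activity rule $Z$ is inactive, which means (i) no neighboring seller is $\eps$-interested in $Z$, and (ii) using the tie-breaking rule in favor of activity, $Z$ has no admissible strictly higher bid to submit. The idea is to show that (i) forces the existence of an admissible higher bid, contradicting (ii).

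Unfolding the definition of $\eps$-interested, (i) tells us that every unmatched neighbor $S$ of $Z$ satisfies $P(S) > P(Z)$, and every matched neighbor $S$ satisfies $P(S) > P(Z) - \eps$. Since all submitted prices are integer multiples of $\eps$, these strict inequalities sharpen to $P(S) \geq P(Z) + \eps$ and $P(S) \geq P(Z)$, respectively. Letting $s$ be the lowest offer over all neighbors of $Z$ and $s'$ the lowest offer over unmatched neighbors of $Z$ (with both taken to be $+\infty$ if the relevant set is empty), the increment rule then permits $Z$ to submit any multiple of $\eps$ up to $\min\{s + \eps,\, s'\} \geq P(Z) + \eps$. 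Because $P(Z) < \val{Z}$ and bids are bounded above by the valuation, $P(Z) + \eps \leq \val{Z}$, so the bid $P(Z) + \eps$ is admissible. The tie-breaking rule then forces $Z$ to submit it, contradicting $Z$'s inactivity.

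The main technical subtlety I expect is keeping the two halves of the $\eps$-interested definition (the matched versus unmatched case) in sync with the two terms $s + \eps$ and $s'$ appearing in the increment rule, so that the absence of an $\eps$-interested neighbor translates cleanly into room to raise the bid by exactly one $\eps$-step. The discretization of prices to $\eps$-multiples is what converts the strict inequalities coming from (i) into the extra slack of $\eps$ needed, and is thus the crucial feature that makes the argument go through; without it, one would only conclude $P(Z) \geq \val{Z} - \eps$ rather than equality.
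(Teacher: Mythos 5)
Your proof is correct and follows essentially the same route as the paper's (which is a one-sentence version of your argument): properties $(a_1)$ and $(a_3)$ are inherited from Claim~\ref{clm:always-estable-1-3}, and $(a_2)$ holds because an unmatched player who has not yet reached his valuation would still have an admissible price move and hence remain active, contradicting finality. The only caveat is that your step $P(Z) + \eps \le \val{Z}$ implicitly treats $\val{Z}$ as an integer multiple of $\eps$; the paper relies on the same ``pretend the closest $\eps$-multiple is the true valuation'' convention stated after Mechanism~(\ref{mech:overbid}), so this is not a gap relative to the paper.
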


Claim~\ref{clm:always-estable-1-3} ensures that the final state
satisfies \PeSSs{1}{3}, and \PeSS{2} holds because an unmatched buyer
will always submit a new higher bid to avoid being inactive, unless he
reaches his valuation. Same for the unmatched sellers.

Note that by Theorem~\ref{thm:MSW-eSS}, if a market converges to an
\eps-stable state, it always realizes \eps-approximate SW.

\begin{definition}[Well-behaved]
A stable state \MSPP, is \emph{well-behaved} iff $(a_1)$ for any
$\edge{B,S} \in E$, if $B \notin \Pi$ and $S\notin \Pi$, then $P(B) <
P(S)$.  An \eps-stable state \MSPP, is \emph{well-behaved} iff not
only property $(a_1)$ is satisfied but also $(a_2)$ for any
$\edge{B,S} \in E$, if either $B \notin \Pi$ or $S \notin \Pi$, then
$P(B) \leq P(S)$.
\end{definition}
Note that the states ruled out by properties $(a_1)$ and $(a_2)$ of
well-behaved states are the corner cases where a buyer-seller pair
having the same valuation (thus having no contribution to SW) are not
chosen in the matching, or players who can obtain utility at most
$\eps$ stop attempting to match with others.

\begin{theorem}
\label{lma:mec2-reach}
For any $\eps > 0$, if we use Mechanism~(\ref{mech:overbid}), and
start from the zero-information state, any well-behaved \eps-stable
state can be reached via a sequence of at most $\nSum$ moves. Hence,
any well-behaved stable state is also reachable.
\end{theorem}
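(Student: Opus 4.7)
The plan is to exhibit, for any well-behaved $\eps$-stable state $(P^*,\Pi^*)$, an explicit sequence of at most $\nSum$ recognitions from the zero-information state that ends exactly at $(P^*,\Pi^*)$. For each matched pair $(B_i,S_i)\in\Pi^*$ write $v_i:=P^*(B_i)=P^*(S_i)$, which is well-defined by property $(a_3)$ of $\eps$-stability, and sort the matched pairs so that $v_1\le v_2\le\cdots$. \emph{Phase~I} processes the pairs in this order with two moves per pair: first $S_i$ is recognized and drops his offer to $v_i$ without matching; then $B_i$ is recognized, raises his bid to $v_i$, and matches with $S_i$. \emph{Phase~II} recognizes, once each, every player $Z$ that is unmatched in $\Pi^*$, and has $Z$ set his price to $\val{Z}$. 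The total is $2|\Pi^*|+(\nSum-2|\Pi^*|)=\nSum$ moves (with one move saved per degenerate $v_i\in\{0,1\}$, so the bound is at most $\nSum$).

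Each intermediate move splits into a routine increment-rule check and a more delicate no-unintended-match check. For the increment rule: when $B_i$ raises to $v_i$, every neighbor $S\in E$ satisfies $P^*(S)\ge v_i-\eps$ by property $(a_1)$ of $\eps$-stability, and the just-lowered $S_i$ is an unmatched neighbor at offer exactly $v_i$, so $\min\{s+\eps,s'\}\ge v_i$; the symmetric inequality handles $S_i$'s lowering, and property $(a_2)$ of well-behavedness handles the Phase~II moves. The substantive step is the matching. When $S_i$ is recognized in Phase~I, each already-processed $B_j$ is matched with bid $v_j\le v_i$, so the matched-buyer interested-condition $v_j-v_i\ge\eps$ fails; every other buyer still carries bid $0$, so the unmatched-buyer condition $0\ge v_i$ fails whenever $v_i>0$. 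Hence $S_i$ merely updates his offer. When $B_i$ is then recognized, the freshly-lowered $S_i$ is an unmatched seller at offer $v_i=P(B_i)$ and so qualifies as an interested partner; any matched $S_j$ that also qualifies is ruled out by the ``unmatched first'' tie-breaking rule, so $B_i$ picks $S_i$. The degenerate case $v_i=0$ and ties $v_j=v_i$ with $(B_j,S_i)\in E$ are handled by processing each tied group jointly: all of the group's $S$'s lower their offers first (so no tied $B$ has yet raised his bid), then each $B$ is recognized and picks its intended $S$ under the unmatched-first rule. Phase~II is then clean: by well-behavedness, $\val{Z}$ lies on the correct side of every neighbor's current price---strictly so by $(a_1)$ when the neighbor is also unmatched in $\Pi^*$---so no new interested pair is created.

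The main obstacle is this no-unintended-match check; the increasing-$v_i$ ordering and the unmatched-first tie-breaking are used crucially there, while the counting and increment-rule verifications are routine bookkeeping. The last assertion of the theorem follows because any stable state has on-edge slack $P^*(S)-P^*(B)\ge 0>-\eps$ and therefore is in particular a well-behaved $\eps$-stable state, so the same construction applies.
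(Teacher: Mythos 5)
Your proof is correct and follows essentially the same two-stage construction as the paper's: process the matched pairs in a monotone price order with two moves each (using $\eps$-stability property $(a_1)$ for the increment rule and the unmatched-first rule to prevent unintended matches), then let the unmatched players move to their valuations using the well-behavedness properties. The only cosmetic difference is that you sort in increasing order with the seller moving first, whereas the paper sorts in decreasing order with the buyer moving first; the two are mirror images and both work.
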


\begin{proof}
Given an \eps-stable state \MSPP, sort all pairs in $\Pi$ in
decreasing order of prices (arbitrarily break the ties), and denote
the ordering as $O$. We propose a two-stage procedure: first stage
handles the players in the matching and second stage deals with the
remainders. Note that we only need to justify that the increment rule
and unmatched first rule hold for every action.

In stage one, choose pairs of players following the order defined by
$O$.  For each pair $\match{B,S}$, let the buyer submit $P(B)$, and
then, let the seller submit $P(S)=P(B)$ and match with $B$.
When $B$ submits $P(B)$, no seller is submitting a price lower then
$P(B)$, hence the increment rule is satisfied.  The unmatched sellers
are submitting $1$, and hence either no one is interested in $B$ or
all of them are interested in $B$ (if $P(B) = 1$, i.e., $P(B)$ is no less than
the seller prices). In the later case, $B$ can directly
match with $S$.

For $S$, assume the highest bid he can see in his neighborhood is
$P(B')$ submitted by buyer $B'$.  By \PeSS{1}, $P(B') \le P(S) + \eps
= P(B) + \eps$. Among the unmatched neighbors of $S$, $B$ is the one
submitting the highest price, and $P(S) = P(B) \ge \max\set{P(B') -
  \eps, P(B)}$, the increment rule is satisfied. Since $S$ matches
with unmatched buyer $B$, the unmatched first rule is also satisfied.

In stage two, choose the unmatched players with an arbitrary order and
let them submit their valuations. For any unmatched buyer $B$, by
property $(a_2)$ of well-behaved states, $P(S) \ge P(B)$ for any
seller $S$ visible to $B$, hence the increment rule is satisfied. In
addition, for any unmatched seller $S$, by property $(a_1)$ of
well-behaved states, $P(B) <P(S)$, thus $B$ cannot match with $S$.  By
analogy, any unmatched seller will also make a valid move and remain
unmatched.

Thus, after exactly \nSum~steps the two stages end, and the market is
in state \MSPP. 
\end{proof}

\subsection{Complete Bipartite Graphs}
\label{sec:cvg-complete}
We now prove that market with complete bipartite trading graph will
always converge when using Mechanism~(\ref{mech:overbid}).

\begin{theorem}
\label{thm:cvg-complete-2}
For a market whose trading graph is a complete bipartite graph, if we
use Mechanism (\ref{mech:overbid}) with any input parameter $\eps >
0$, and begin with any valid starting state, then the market will
converge to an \eps-stable state after at most $n^3/\eps$ steps.
\end{theorem}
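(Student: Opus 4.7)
The plan is to decompose the execution into \emph{phases} separated by ``breakthrough'' actions that set a new historical extreme price for some agent, bound the number of phases via a monotone potential, and bound the number of steps within each phase via a combinatorial argument that leverages the completeness of the bipartite trading graph.

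For the phase count, I would use the potential
\[
\Psi = \sum_{B \in \BSet} M^+(B) + \sum_{S \in \SSet}\bigl(1 - M^-(S)\bigr),
\]
where $M^+(B)$ is the highest bid $B$ has ever submitted and $M^-(S)$ is the lowest offer $S$ has ever submitted. Since $M^+(B)$ and $M^-(S)$ are historical extremes they are monotone over time, so $\Psi$ is monotone non-decreasing and trivially upper bounded by $n$. By the minimum-increment rule, whenever an action strictly pushes some agent's historical extreme---a \emph{breakthrough}---the corresponding term jumps by at least $\eps$. Hence the total number of breakthroughs across the run is at most $n/\eps$. Define a \emph{phase} to be a maximal sequence of steps between two consecutive breakthroughs.

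Within a single phase, no agent reaches a new extreme, so every submitted bid and offer sits at a price level that was already attained by a previous extremum; the set of ``available'' price levels is essentially frozen during the phase. I would then argue that the number of steps inside one phase is $O(n^2)$. The argument has three ingredients: (i) new-match events that bind two previously unmatched agents each add one to $|\Pi|$, so by matching size there are at most $n/2$ such events in a phase; (ii) stealing events---where a buyer (resp.\ seller) displaces an incumbent match---require the stealer's price to exceed the incumbent matching price by at least $\eps$, and within a phase this gap must be filled by one of the finitely many pre-existing price levels, so the combination of which edge is being formed and at what level can be charged to a distinct $(B,S)$ edge, of which there are $n^2$ in the complete bipartite graph; and (iii) non-breakthrough bid-raises and offer-drops within a phase are bounded because the increment rule restricts each such update to already-seen values and each agent can traverse these values only a bounded number of times per phase.

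Combining the two bounds yields total steps at most $O(n/\eps) \cdot O(n^2) = O(n^3/\eps)$, giving the theorem. The main obstacle is establishing the per-phase bound with full rigor: within a phase, a buyer's currently posted bid can drop (in a seller-initiated match, where $P(B)$ is adjusted down to $P(S)$) and later rise (via an active bid), and a matched pair can be broken, reformed, and broken again through cascading steals. I would handle this by introducing a secondary within-phase potential, for example a lexicographic order on the sorted vector of current matching prices, and show that it strictly progresses at every step inside a phase because no new price level is introduced mid-phase. Completeness of the trading graph is essential here: it ensures that every displaced agent has the full other side available as a backup, and it justifies the $n^2$-edge charging scheme. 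In the incomplete-graph case this combinatorial bound breaks down, consistent with the non-convergence examples discussed later in Section~\ref{sec:cvg}.
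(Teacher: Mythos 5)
Your outer decomposition is sound: $\Psi$ is monotone, bounded by $n$, and jumps by at least $\eps$ at every breakthrough, so there are at most $n/\eps$ phases. (The paper extracts the same $n/\eps$ factor from a potential on \emph{current} prices, $\Phi_P=\sum_{S}P(S)+\sum_{B}(1-P(B))$, shown to be non-increasing while the matching size $\gamma$ is fixed.) The gap is entirely in the per-phase bound, and that is where all of the difficulty of the theorem lives: the cycling example of Claim~\ref{clm:unstable-overbid} runs forever with every price confined to $\{5\eps,6\eps\}$, so after $O(n)$ breakthroughs it sits inside a single phase of your decomposition; ``no new extremes'' by itself excludes nothing, and the complete-graph hypothesis has to enter through a concrete invariant that you never state. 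Ingredient (ii) is unjustified---nothing prevents the same edge $(B,S)$ from being re-formed by a steal at the same price level several times within one phase, precisely because (as you note) a buyer's posted bid can be dragged down by a seller-initiated match and later re-raised to a previously attained value without any breakthrough---and ingredient (iii) (``each agent can traverse these values only a bounded number of times per phase'') is the claim to be proved, not an argument for it. The fallback you propose, a lexicographic order on the sorted vector of current matching prices, does not progress monotonically: a buyer-initiated steal raises one matching price by $\eps$ while a seller-initiated steal lowers one by $\eps$, so the two kinds of within-phase moves push that potential in opposite directions.

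The paper closes exactly this hole with a structural lemma that is the operational content of completeness (Lemma~\ref{lma:no-over-price}): in any reachable state on a complete trading graph, $P(B)>P(S)$ forces both $B$ and $S$ to be matched---equivalently, an unmatched buyer's bid never exceeds any seller's offer. That invariant makes $\Phi_P$ non-increasing at fixed $\gamma$, and the residual ``no-change'' moves where $\Phi_P$ stalls are then controlled by showing that a no-change buyer move forces the buyer's bid to equal the globally lowest offer, that no-change buyer and seller moves cannot alternate, and that at most $\gamma$ consecutive no-change moves can occur before the lowest offer strictly rises. Some invariant of this kind is indispensable; without it your charging scheme has nothing to charge against, and the incomplete-graph counterexample shows the within-phase dynamics genuinely can cycle once the invariant fails. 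I would either prove Lemma~\ref{lma:no-over-price} and rebuild your per-phase count on top of it, or adopt the paper's inner potential outright.
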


We need the following lemma to prove Theorem~\ref{thm:cvg-complete-2}.

\begin{lemma}
\label{lma:no-over-price}
For a market \market\ whose trading graph is a complete bipartite
graph, if we use Mechanism (\ref{mech:overbid}) with any input
parameter $\eps > 0$, then at any state \MSPP\ reached from a valid
starting state, for any $\edge{B,S} \in E$, if $P(B) > P(S)$, then
both $B$ and $S$ are matched.
\end{lemma}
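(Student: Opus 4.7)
My plan is to proceed by induction on the number of mechanism moves executed from the valid starting state. The base case is immediate: a valid starting state satisfies $P(B) \le P(S)$ on every edge, so the hypothesis $P(B) > P(S)$ is vacuous. For the inductive step I will assume the invariant holds at $\mathcal{S}^t$ and analyze a single move. By Claim~\ref{clm:always-estable-1-3}, every reachable state satisfies properties $(a_1)$ and $(a_3)$ of $\eps$-stability, so $P(B) - P(S) \le \eps$ on every edge and every matched pair shares a common price. Since all submitted prices are integer multiples of $\eps$, any would-be violation of the invariant must in fact have $P(B) = P(S) + \eps$ exactly.

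By symmetry, I will consider a recognized buyer $B^*$, who is necessarily unmatched and active, and let $b^*$ denote his new bid. The increment rule gives $b^* \le \min(s + \eps,\, s')$, where $s = \min_S P(S)$ and $s' = \min_{S \notin \Pi} P(S)$, and in a complete bipartite graph these range over \emph{all} sellers. If no seller becomes interested in $B^*$ after the bid, then every unmatched $S$ satisfies $P(S) > b^*$ and every matched $S$ satisfies $P(S) > b^* - \eps$; the latter, together with prices being multiples of $\eps$, forces $P(S) \ge b^*$, so $b^* \le P(S)$ for every seller and the invariant is preserved (nothing else changes). If $B^*$ instead matches with an unmatched seller $S^*$, then $S^*$'s interest condition requires $b^* \ge P(S^*)_{\text{old}}$ while the increment rule forces $b^* \le s' \le P(S^*)_{\text{old}}$; hence $b^* = P(S^*)_{\text{old}}$, no price actually moves, and the invariant follows from the inductive hypothesis together with the fact that $B^*, S^*$ have simply become matched.

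The delicate case, which I expect to be the main obstacle, is when $B^*$ steals a matched seller $S^*$ from its previous partner $B'$. After the move, $B'$ is the unique player that transitions from matched to unmatched, his bid $P(B')$ is unchanged, and $P(S^*)$ has risen to $b^*$. I therefore need to rule out any seller $S$ with $P(B') > P(S)$ at time $t+1$. For $S = S^*$ this is direct, since stealing a matched seller requires $b^* \ge P(S^*)_{\text{old}} + \eps = P(B') + \eps$, so $P(B') < b^* = P(S^*)_{\text{new}}$. For any other seller $S$, the decisive step is to combine this lower bound with the increment-rule upper bound $b^* \le s + \eps \le P(S) + \eps$, which telescopes to $P(B') \le P(S)$. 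This is precisely where the complete-bipartite structure is essential: $s$ ranges over \emph{all} sellers, so the upper bound holds uniformly in $S$. A symmetric analysis handles moves initiated by a recognized seller, completing the induction.
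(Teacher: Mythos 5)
Your proof is correct and is essentially the paper's argument run forward: where the paper takes the first time a violation occurs and case-splits on the preceding move, you induct on moves and case-split on the move's effect, but the decisive inequalities are identical — the increment rule gives $b^* \le s+\eps \le P(S)+\eps$ uniformly over all sellers (this is where completeness enters), while stealing a matched seller forces $b^* \ge P(B')+\eps$, so $P(B') \le P(S)$. The only point you leave implicit is that the newly matched $B^*$ cannot violate the invariant against unmatched sellers, which is immediate from $b^* \le s'$.
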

\begin{proof}
Assume by contradiction that there exists some $\edge{B,S} \in E$ with
$P(B) > P(S)$ and, wlog, $B$ being unmatched. Since in the starting
state, $P(B) \le P(S)$, let $t$ be the first time that this
happens. Therefore, at time $t-1$, either $P(B) \le P(S)$ or $B$ is
matched. Note that since the prices are integer multiples of \eps, a
state with $P(B) > P(S)$ implies $P(B) - P(S) \ge \eps$. On the other
hand, since \PeSS{1} always holds, $P(B) - P(S) \le \eps$. Thus $P(B)
= P(S) + \eps$ at time $t$.

If $P(B) \le P(S)$ at time $t-1$, $P(B) > P(S)$ can only be a
consequence of either $B$ or $S$ being recognized. If $B$ is
recognized and submits a bid of $P(S) + \eps$, since $S$ is interested
in $B$, by the matching rule, $B$ will be matched. If $S$ is
recognized and submits an offer of $P(B)-\eps$, by the increment rule,
$B$ must be matched (otherwise $S$ would not submit an offer lower
than $P(B)$), a contradiction.

Assume that $B$ was matched to some seller $S'$ at time $t-1$.  The
only valid action at time $t-1$ that can make $B$ unmatched is if some
buyer $B'$ overbids $B$ and match with seller $S'$. If $S = S'$, then
after the move, $P(B) < P(S)$, a contradiction. If $S \neq S'$, then
this move will not change the bid of $B$ or offer price of $S$, and
hence, $P(B) = P(S) + \eps$ in time $t-1$. Since the trading graph is
a complete bipartite graph, $S$ is a neighbor of $B'$. By the
increment rule, $B'$ can only submit a price at most equal to $P(S) +
\eps =P(B)$, thus $B'$ is unable to overbid $B$, a contradiction.
\end{proof}

\begin{definition}[$\gamma$-feasible]
A market state \MSPP\ is said to be $\gamma$-feasible iff there are
exactly $\gamma$ matches in $\Pi$.
\end{definition}

\begin{proof}[Proof of Theorem~\ref{thm:cvg-complete-2}]
Assume at any time $t$, the state \MSt{t} of the market is
\FSB{\gamma^t}. Define the following potential function
\[
   \Phi_{P} = \sum_{S_i \in \SSet} P(S_i) + \sum_{B_i \in \BSet} (1 -
   P(B_i))
\]

Note that the value of $\Phi_P$ is always an integer multiple of
$\eps$.  We will first show that $\gamma^{t}$ forms a non-decreasing
sequence over time, and then argue that, for any $\gamma$, the market
can stay in a \FSB{\gamma} state for a bounded number of
steps. Specifically, we will show that, if $\gamma$ does not change,
$\Phi_P$ is a non-increasing function and can stay unchanged for at
most $\gamma$ steps. Since the maximum value of $\Phi_P$ is bounded by
$\nSum$, it follows that after at most $(\gamma \nSum)/\eps$
steps, the market moves from a $\gamma$-feasible state to a
$(\gamma+1)$-feasible state (or converges).

We argue that $\gamma^t$ forms a non-decreasing sequence over time.
Since any recognized player is unmatched, if the action of an
unmatched player $Z$ results in a change in the matching, $Z$ either
matches with another unmatched player, or matches to a player that was
already matched. In either case, the total number of matched pairs
does not decrease.

Furthermore, we prove if $\gamma$ does not change, then $\Phi_{P}$ is
non-increasing. Moreover, the number of successive steps that $\Phi_P$
stay unchanged is at most $\gamma$.

To see that $\Phi_{P}$ is non-increasing, first note that $\Phi_P$ can
increase only when either a buyer decreases his bid or a seller
increases his offer.  Assume an unmatched buyer $B$ is recognized
(seller case is analogous), and the price function before his move is
$P$. To increase $\Phi_{P}$, since $B$ can only increase his bid, he
must increase an offer by overbidding and matching with a seller $S$,
resulting in the two of them submitting the same price $b$. The buyer
bid increases by $b - P(B)$ and the seller offer increases by $b -
P(S)$. Since $B$ is unmatched, by Lemma~\ref{lma:no-over-price}, $P(B)
\le P(S)$, and hence $\Phi_P$ will not increase.

We now bound the maximum number of steps for which $\Phi_P$ could
remain unchanged. A move from a buyer $B$ that does not change
$\Phi_P$ occurs only when $B$ overbids a matched seller $S$, where the
bid and the offer are equal both before and after the move. We call
this a \emph{no-change} buyer move. By analogy, a no-change seller
move can be defined.

In the remainder of the proof, we first argue that a no-change buyer
move can never be followed by a no-change seller move, and vice versa.
After that, we prove the upper bound on the number of consecutive
no-change moves to show that $\Phi_P$ will eventually decrease (by at
least $\eps$).

Assume at time $t_1$, a buyer $B_{t_1}$ made a no-change move and
matched with a seller $S_{t_1}$, who was originally paired with the
buyer $B_{t_1}'$.\footnote{An action at time time $t$ will take effect
  at the time $t+1$, and $P^t$ is the price function before any action
  is made at time $t$.} We prove that no seller can make a no-change
move at time $t_1 + 1$. The case that a seller makes a no-change move
first can be proved analogously. Suppose at time $t_1 + 1$, a seller
$S_{t_1+1}$ is recognized and decreases his offer by \eps.  Since
$B_{t_1}$ made a no-change move, we have
\begin{equation}
\label{eq:no-change}
P^{t_1}(B_{t_1}') = P^{t_1}(B_{t_1})
\end{equation}
Denote the lowest seller offer (highest buyer bid) at any time $t$ by
$s^{t}$ ($b^{t}$). By Lemma~\ref{lma:no-over-price}, $P^{t_1} (B_{t_1}) \le
P^{t_1}(S)$ for any seller $S$, hence $P^{t_1}(B_{t_1}) \le
s^{t_1}$. Moreover, since $P^{t_1}(B_{t_1}) = P^{t_1}(S_{t_1}) \ge
s^{t_1}$, we have
\begin{equation}
\label{eq:buyer-lowest-seller}
P^{t_1}(B_{t_1}) = s^{t_1}
\end{equation}
In other words, a buyer can make a no-change move, only if his bid is
equal to the lowest offer. Similarly, if $S_{t_1 + 1}$ can make a
no-change move at time $t_1 + 1$, his offer is equal to the highest
bid.  Since the highest bid at time $t_1$ ($b^{t_1}$) is at most
$s^{t_1} + \eps$ (\PeSS{1}), after $B_{t_1}$ submits a bid of $s^{t_1}
+ \eps$, he will be submitting the highest bid at time $t_1+ 1$. Hence
\begin{equation}
\label{eq:seller-price}
P^{t_1+1}(S_{t_1 + 1}) = b^{t+1} = P^{t_1+1}(B_{t_1}) = P^{t_1+1}(B'_{t_1}) + \eps
\end{equation}
Therefore, at time $t_1+1$, after $S_{t_1+1}$ decreases his offer by
$\eps$, the unmatched buyer $B_{t_1}'$ is interested in $S_{t_1+1}$.
By the unmatched first rule, $S_{t_1+1}$ will match with an unmatched
player, hence this cannot be a no-change move.

This proves that a no-change seller move can never occur after a
no-change buyer move and vice versa.
We now prove the upper bound on the number of
consecutive no-change buyer moves.

For any sequence of consecutive no-change buyer moves, if there exists
a time $t_2$ such that $s^{t_2} > s^{t_2 -1}$, for any unmatched buyer
$B$ at time $t_2$, $P^{t_2}(B) \le s^{t_2 -1} < s^{t_2}$. By
Equation~(\ref{eq:buyer-lowest-seller}), no buyer can make any more
no-change move. Moreover, since any no-change buyer move will increase
the submission of a matched seller who is submitting the lowest offer,
after at most $\gamma$ steps, the lowest offer must increase, implying
that the length of the sequence is at most $\gamma$.

To conclude, the total number of steps that the market could stay in
\FSB{\gamma} states is bounded by $(\nSum /\eps) \gamma$. As $\gamma
\le \nSum$, the total number of steps before market converges is at
most $ \nSum^3/\eps$.
\end{proof}

\subsection{General Bipartite Graphs}
\label{sec:cvg-general}

In this section, we study the convergence of markets with an arbitrary
bipartite trading graph.  Although by Theorem~\ref{lma:mec2-reach},
using Mechanism (\ref{mech:overbid}), the market can reach any
well-behaved $\eps$-stable state, when the trading graph of a market
can be an arbitrary bipartite graph, there is no guarantee that
Mechanism~({\ref{mech:overbid}}) will actually converge.

\begin{clm}
\label{clm:unstable-overbid}
In a market whose trading graph is an arbitrary bipartite
graph, Mechanism~(\ref{mech:overbid}) may never converge.
\end{clm}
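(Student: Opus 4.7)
The plan is to exhibit a concrete counterexample: a market with an incomplete bipartite trading graph, a valid starting state $\sigma$, and an adversarial four-step sequence of recognitions whose execution returns exactly to $\sigma$, producing an infinite loop and thereby showing non-convergence.

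I would take three buyers with $\val{B_i}=1$, three sellers with $\val{S_j}=0$, set $\eps=0.1$, and use the bipartite $6$-cycle whose edge set is $\{(B_1,S_1),\, (B_1,S_2),\, (B_2,S_2),\, (B_2,S_3),\, (B_3,S_1),\, (B_3,S_3)\}$. The critical structural property is that $B_3$ is \emph{not} adjacent to $S_2$. The starting state $\sigma$ consists of the matched pairs $(B_1,S_1)$ and $(B_2,S_3)$ both at price $0.5$, together with $B_3$ unmatched at bid $0.3$ and $S_2$ unmatched at offer $0.5$; one checks that $\sigma$ is valid and satisfies properties $(a_1),(a_3)$ of $\eps$-stable states, so by Claim~\ref{clm:always-estable-1-3} it is a legitimate starting point for Mechanism~\ref{mech:overbid}.

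The four steps of the cycle are: (1) Recognize $S_2$: no neighbor of $S_2$ is $\eps$-interested, so $S_2$ lowers his offer from $0.5$ to $0.4$, which the increment rule permits (the maximum bid among $S_2$'s neighbors is $0.5$ and no unmatched buyer is a neighbor). (2) Recognize $S_2$ again: both matched neighbors $B_1$ and $B_2$ are now interested since $P(B_i)-P(S_2)=\eps$, and since neither is unmatched the unmatched-first rule does not disambiguate; the adversary has $S_2$ match with $B_1$ at $B_1$'s bid $0.5$, raising $S_2$'s offer back to $0.5$, breaking the match $(B_1,S_1)$, and leaving $S_1$ unmatched at offer $0.5$. (3) Recognize $S_1$: no neighbor is interested, so $S_1$ symmetrically lowers his offer from $0.5$ to $0.4$. (4) Recognize $S_1$ again: $B_1$ is interested ($P(B_1)-P(S_1)=\eps$) while $B_3$ is not (bid $0.3<0.4$), so $S_1$ must match $B_1$ at $0.5$; this raises $S_1$'s offer back to $0.5$ and leaves $S_2$ unmatched at offer $0.5$. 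The resulting state coincides with $\sigma$, so iterating the sequence yields an infinite, non-terminating execution.

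The main obstacle is preventing the unmatched-first tie-breaking rule from collapsing the dynamics into a stable state by forcing an unmatched seller to accept an unmatched interested buyer. The graph topology handles step~(2): $B_3$ is not a neighbor of $S_2$, so $B_3$ is never among $S_2$'s interested buyers regardless of his bid. The price gap handles step~(4): although $B_3$ is adjacent to $S_1$ in the trading graph, the choice $P(B_3)=0.3$ keeps $B_3$ strictly below the interest threshold once $S_1$'s offer has dropped to $0.4$. A final observation is that $B_3$ is technically active throughout the cycle since $0.3<\val{B_3}=1$, but because the mechanism recognizes an arbitrary active player at each iteration, the adversary is free to never recognize $B_3$, and the loop proceeds indefinitely.
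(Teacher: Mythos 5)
Your overall strategy is exactly the one the paper uses: build a cyclic (incomplete) trading graph, pick a valid starting state, and let an adversarial recognition order loop forever while the unmatched-first rule is dodged by graph topology and price gaps. However, there is a concrete gap in steps (2) and (4): you assume that when a recognized seller undercuts and steals an already-matched buyer, the pair trades at the \emph{buyer's} (higher) bid, so that $S_2$'s offer bounces back up to $0.5$ and the state returns exactly to $\sigma$. That is not the intended semantics of the matching rule. The rule for a recognized buyer raises the seller's offer up to the buyer's new bid; the ``analogous'' seller action lowers the stolen buyer's bid down to the seller's new offer, i.e., the pair always trades at the price just set by the recognized player. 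This reading is forced by the definition of an $\eps$-interested player (a matched buyer is interested in $S$ only when $P(B)-P(S)\ge\eps$, which is a utility gain only if he ends up trading at $P(S)$), and by the proof of Theorem~\ref{thm:cvg-complete-2}, where a ``no-change seller move'' must decrease the bid of the matched buyer holding the highest bid; under your reading, two unmatched sellers in a \emph{complete} bipartite graph could steal the same buyer back and forth forever with no price ever changing, contradicting Theorem~\ref{thm:cvg-complete-2} itself.

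Under the correct rule your loop does not close: after $S_2$ drops to $0.4$ and takes $B_1$, the pair $(B_1,S_2)$ sits at $0.4$, so $S_1$ (unmatched at $0.5$) can only re-attract $B_1$ by dropping to $0.3$ --- at which point the unmatched $B_3$ (bid $0.3$) is also interested and the unmatched-first rule forces $S_1$ to match $B_3$, after which the market is $\eps$-stable. More generally, a cycle driven only by seller-initiated steals drags the matched buyers' bids monotonically downward, so it must eventually collide with an unmatched buyer and terminate. The paper's example avoids this by using an $8$-cycle in which a buyer-initiated steal (pushing a price up by $\eps$) alternates with a seller-initiated steal (pushing a price down by $\eps$), so the multiset of prices is preserved and the whole configuration merely rotates around the cycle, giving states at times $t$ and $t+2$ that are isomorphic. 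To repair your construction you would need to reintroduce that alternation (and, as a minor point, note that a price submission and the resulting match occur within a single recognition step, so your four steps are really two).
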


\def\fig4width{1.5in}
\def\fig3width{0.24\textwidth}

\begin{figure}
\centering
  \includegraphics[width=\fig3width]{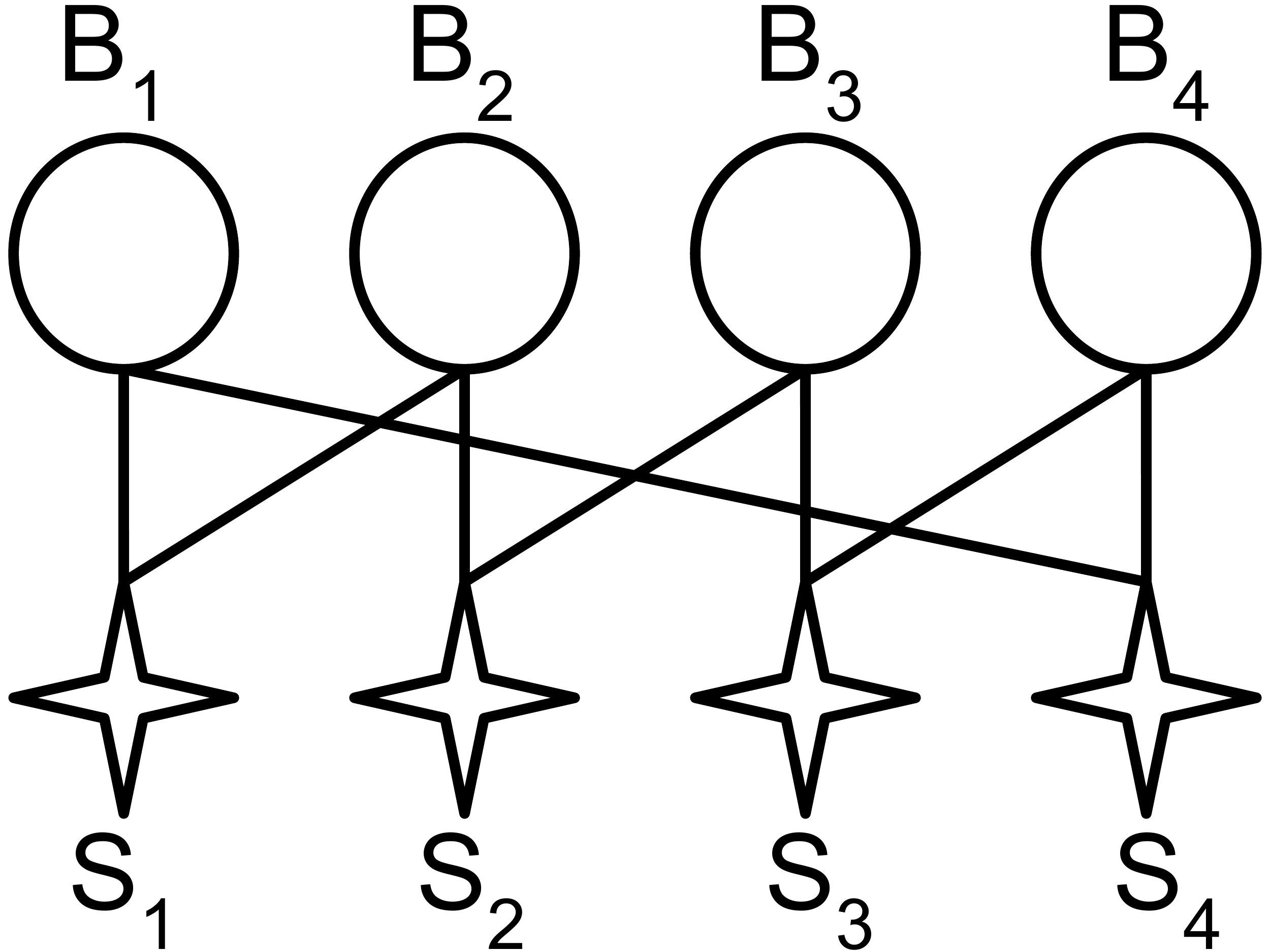}
  \includegraphics[width=\fig3width]{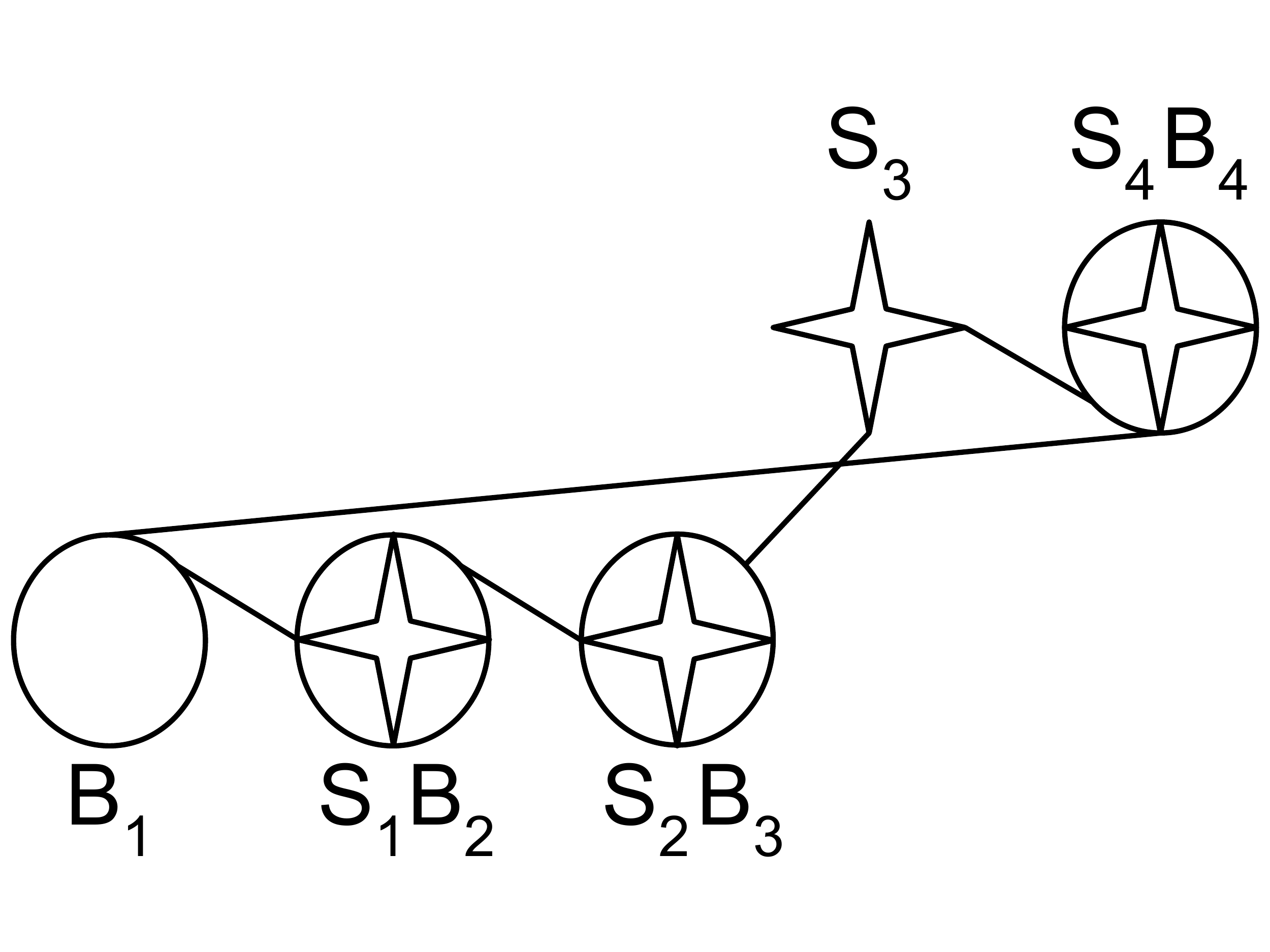}
  \includegraphics[width=\fig3width]{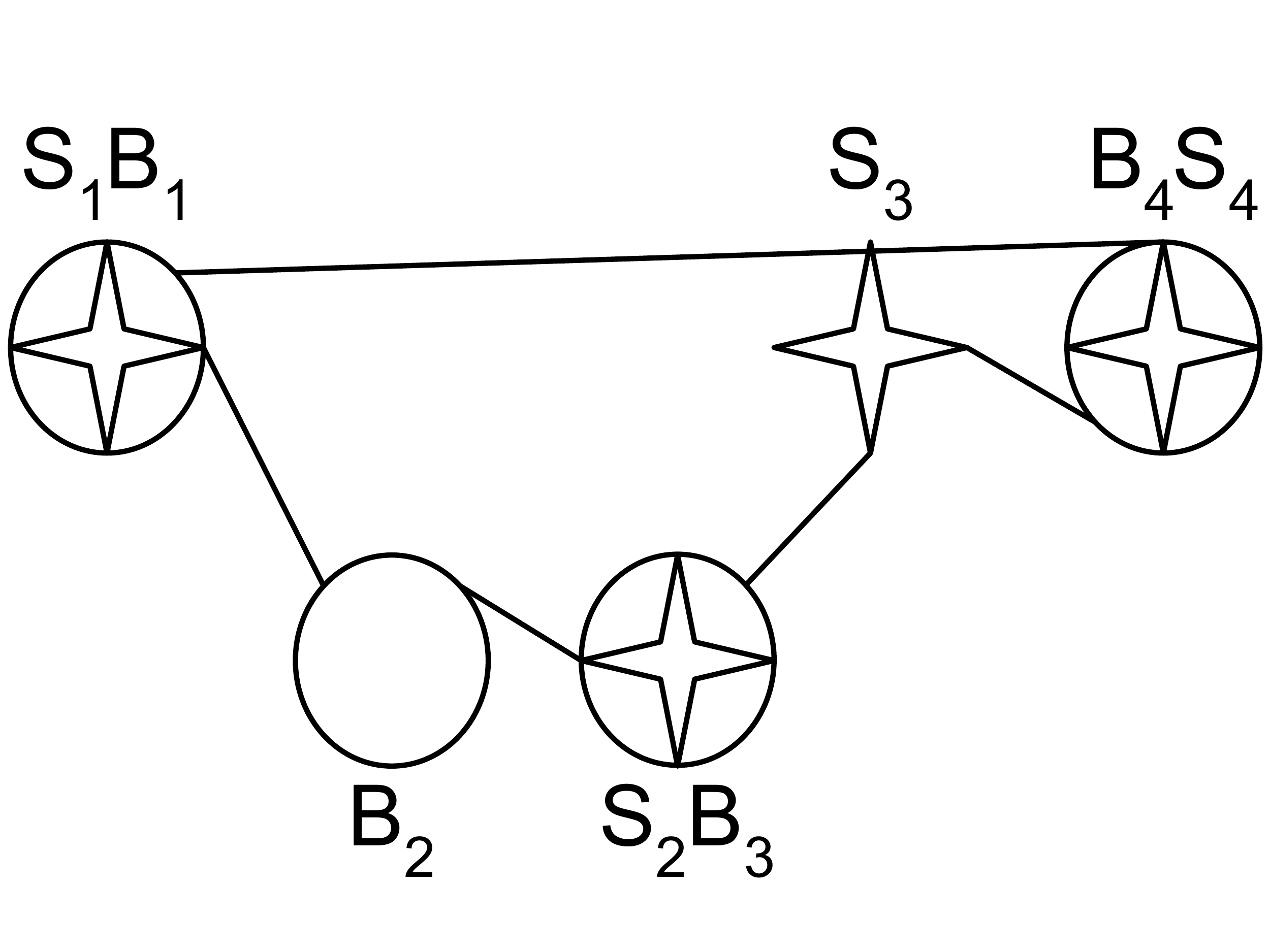}
  \includegraphics[width=\fig3width]{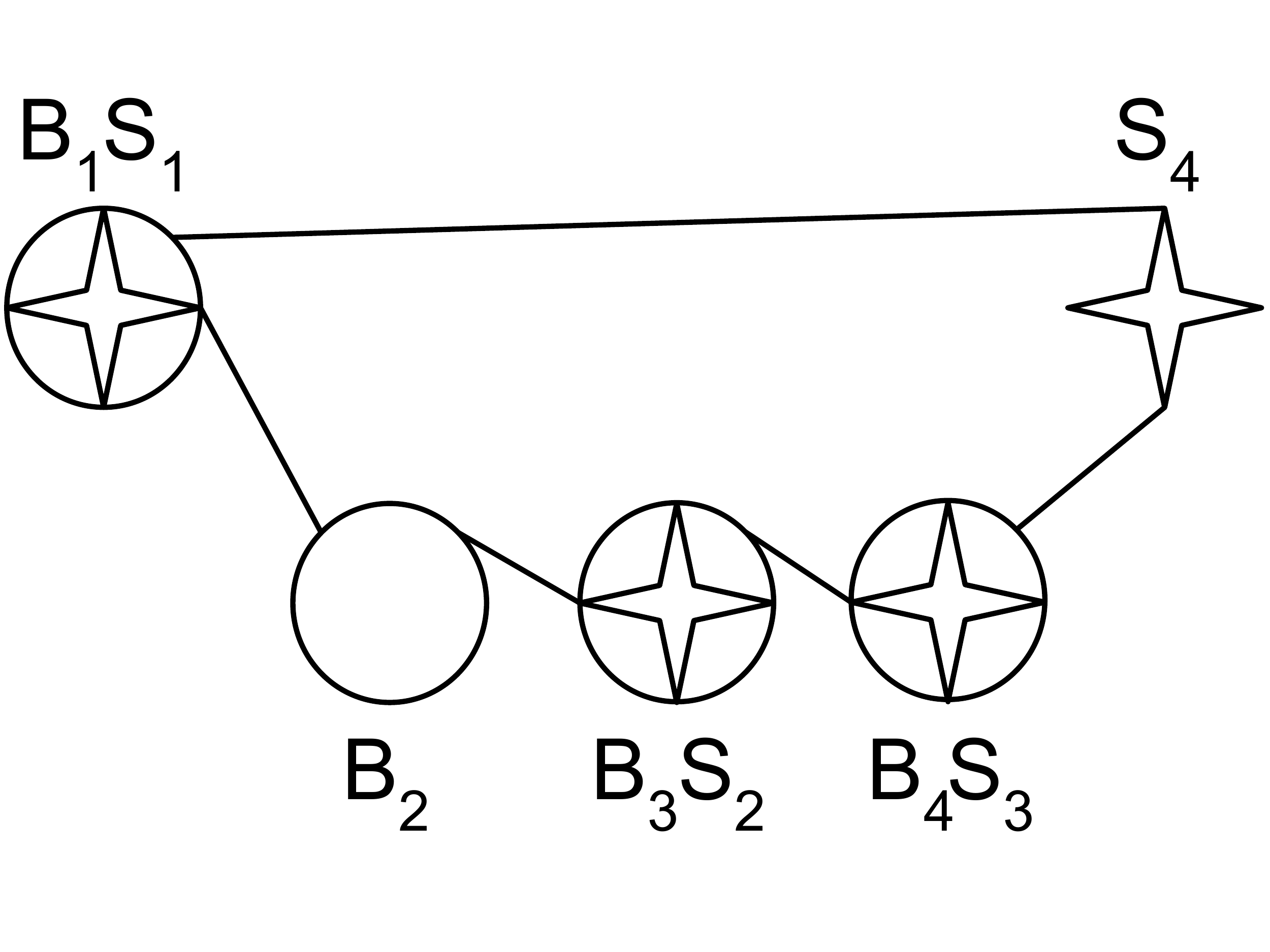}
  \caption{Unstable market with general trading graph and
    Mechanism~(\ref{mech:overbid})
     \label{fig:m2-unstable-general}}
\end{figure}

Consider the market shown in Figure~\ref{fig:m2-unstable-general}. In
this market, there are four buyers ($B_1$ to $B_4$) all with valuation
$1$ and four sellers ($S_1$ to $S_4$) all with opportunity cost
$0$. Moreover, the trading graph is a cycle of length $8$, as
illustrated by the first graph in
Figure~\ref{fig:m2-unstable-general}.  Assume at some time $t$, the
market enters the state illustrated by the second graph, where $B_1,
B_2,S_1,B_3, S_2$ are submitting $5\eps$, $S_3, B_4, S_4$ are
submitting $6\eps$, and pairs $(B_2,S_1)$, $(B_3, S_2)$ and $(B_4,
S_4)$ are matched.

At time $t+1$, since $B_1$ is unmatched, he can be recognized and
submit $6\eps$. $S_1$ is the only interested seller, hence $B_1, S_1$
will match and the offer of $S_1$ increases to $6\eps$, which leads to the
state shown in the third graph.  Similarly, at time $t+2$, since $S_3$
is unmatched, he can be recognized and submit $5\eps$. $B_4$ is the
only interested buyer, hence $B_4, S_3$ will match and bid of $B_4$
increases to $6\eps$, which leads to the state shown in the fourth
graph.

Notice that the states at time $t$ and $t+2$ are isomorphic. By
shifting the indices and repeating above two steps, the market will
never converge.

Observe that the cycle described in Claim~\ref{clm:unstable-overbid}
is caused by an adversarial coordination between the actions of
various agents.  To break this pathological coordination, we introduce
Mechanism (\ref{mech:random}) which is a natural extension of
Mechanism (\ref{mech:overbid}) that uses randomization.  We first
define this mechanism, and then prove that on any trading graph, with
high probability, the mechanism leads to convergence in a number of
steps that is polynomially bounded in the number of agents.

\begin{mechanism}\label{mech:random}
(with input parameter $\eps > 0$)
\begin{itemize}
\item \textbf{Activity Rule:} Among the unmatched buyers, any buyer
  that neither submits a new higher bid nor has a seller that is
  interested in him, is labeled as inactive. All other unmatched
  buyers are labeled as active. An active (inactive) seller is defined
  analogously. An inactive player changes his status iff some player
  on the other side matches with him.
\item \textbf{Minimum Increment:} 
  Each submitted price must be an integer multiple of \eps.
\item \textbf{Bounded Increment Rule:} In each step, a player is only
  allowed to change his price by \eps.
\item \textbf{Recognition:} Among all players who are active, one is
  recognized uniformly at random.
\item \textbf{Matching:} After a player, say a buyer $B$, is
  recognized, if $B$ does not submit a new price, then $B$ will match
  to an interested seller if one exists. If the offer of the seller is
  lower than the bid $b$, it is immediately raised to $b$. The seller
  action is defined analogously.
\item \textbf{Tie Breaking:} When choosing a player on the other side
  to match to, an unmatched player is given priority (unmatched first
  rule).
\end{itemize}
\end{mechanism}

Notice that we ask players to move cautiously through the bounded
increment rule. Players can either change the price by $\eps$ or match
with an interested seller, and always favor being active. Note that,
any move in Mechanism~(\ref{mech:overbid}) can be simulated by at most
$(1/\eps + 1)$ moves in Mechanism~(\ref{mech:random}) ($1/\eps$ for
submitting new price and $1$ for forming a match). The following is an
immediate consequence of results shown in
Section~\ref{subsec:main_mech}.

\begin{cor}\label{cor:mech2-cors}
For any market, if we use Mechanism~(\ref{mech:random}), $(i)$
starting from the zero-information state, any well-behaved
$\eps$-stable state can be reached in $\nSum (1/\eps + 1)$ steps, and
$(ii)$ beginning with a valid starting state, \PeSSs{1}{3} always
hold, and the final state is \eps-stable.
\end{cor}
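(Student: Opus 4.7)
The plan is to derive both parts of Corollary~\ref{cor:mech2-cors} by treating Mechanism~(\ref{mech:random}) as a ``slowed-down'' version of Mechanism~(\ref{mech:overbid}), using the simulation noted just before the statement: a single action in Mechanism~(\ref{mech:overbid}) decomposes into at most $1/\eps$ unit-$\eps$-price-increments followed by at most one matching step, each of which is a legal move under the rules of Mechanism~(\ref{mech:random}).

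For part~$(i)$, I would start from the explicit $\nSum$-step sequence produced by the proof of Theorem~\ref{lma:mec2-reach} which drives the zero-information state to a prescribed well-behaved $\eps$-stable state $\MSPP$, and expand each of those $\nSum$ actions into its Mechanism~(\ref{mech:random}) simulation. Since every price in $\MSPP$ lies in $[0,1]$ and each action changes only one agent's price, each expansion costs at most $1/\eps + 1$ recognitions, for a total of at most $\nSum(1/\eps + 1)$. Inside one expansion the recognitions are scheduled contiguously: the player submits $\eps$-increments until reaching his target price, and then, if a match was called for, is recognized once more and refrains from changing his price in order to match with his intended partner. The rationality/increment inequality that bids never exceed the lowest neighborhood offer by more than $\eps$ is exactly what the proof of Theorem~\ref{lma:mec2-reach} already verifies at the endpoints, and it persists along the monotone ramp.

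For part~$(ii)$, every action permitted by Mechanism~(\ref{mech:random}) is also permitted by Mechanism~(\ref{mech:overbid}) (an $\eps$-increment is a minimal legal Mech~(\ref{mech:overbid}) price submission and the matching rule is identical), so Claim~\ref{clm:always-estable-1-3} applies verbatim and \PeSSs{1}{3} are preserved along every run of Mechanism~(\ref{mech:random}) starting from a valid starting state. Property~$(a_2)$ then follows at any terminal state: if some unmatched player $Z$ had $P(Z) \ne \val{Z}$, the activity rule together with the bounded-increment rule would keep $Z$ active (he could still submit an $\eps$-increment toward $\val{Z}$), contradicting terminality. Hence every reachable terminal state is $\eps$-stable.

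The only real obstacle is the mid-ramp bookkeeping in part~$(i)$, namely, avoiding a premature match while a single player is being ramped toward his target price. This is handled by explicitly exercising the ``if $B$ does not submit a new price'' branch of the matching rule throughout the ramp: the player opts to submit the next $\eps$-increment, which is always a legal alternative, and so any match is postponed until the designated matching step. Modulo that detail, both parts reduce to Theorem~\ref{lma:mec2-reach} and Claim~\ref{clm:always-estable-1-3} respectively, matching the paper's own description of the statement as an ``immediate consequence'' of the results in Section~\ref{subsec:main_mech}.
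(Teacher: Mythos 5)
Your proposal is correct and follows essentially the same route as the paper, which justifies the corollary solely by the remark that each move of Mechanism~(\ref{mech:overbid}) can be simulated by at most $1/\eps$ unit price increments plus one matching step in Mechanism~(\ref{mech:random}), then invokes Theorem~\ref{lma:mec2-reach} for part $(i)$ and Claim~\ref{clm:always-estable-1-3} together with the activity rule for part $(ii)$. Your elaboration of the mid-ramp bookkeeping (using the ``if $B$ does not submit a new price'' branch to defer matching) is a reasonable filling-in of a detail the paper leaves implicit.
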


We are now ready to prove our second main result, namely, for any
trading graph, with high probability, Mechanism~(\ref{mech:random})
converges to a $\eps$-stable state in a number of steps that is
polynomially bounded in the number of agents. We will utilize the
following standard fact about random walk on a line
(see~\cite{RandomizedBook}, for instance).

\begin{clm}
\label{clm:random-walk}
Consider a random walk on $\{ 0, 1, 2, ..., N \}$ such that for any $i
\in [1,N]$, the random walk transition from state $i$ to state $(i
-1)$ happens with probability $\alpha$, and for any $i \in [0, N-1]$,
the random walk transition from state $i$ to state $(i +1)$ happens
with probability $\beta$, for some $\alpha + \beta = 1$. Then starting
from any $i \in [0,N]$, with probability at least $1/2$, the random
walk either reaches the state $0$ or the state $N$, after $O(N^2)$
steps.
\end{clm}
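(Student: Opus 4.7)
The plan is to bound the expected hitting time $\tau = \min\{t \ge 0 : X_t \in \{0, N\}\}$ by $O(N^2)$ and then apply Markov's inequality to obtain the desired probability statement. The main device is optional stopping applied to a well-chosen nonnegative (super)martingale; because the walk can be arbitrarily biased, the argument naturally splits into two regimes depending on $d := \beta - \alpha$.

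In the nearly-symmetric regime $|d| \le 1/(2N)$, I would use the quadratic potential $\phi(i) := i(N-i)$. A direct computation at any interior state $i$ gives
\[
\E[\phi(X_{t+1}) - \phi(X_t) \mid X_t = i] \;=\; d(N - 2i) - 1,
\]
and under the smallness assumption on $|d|$ the right-hand side is at most $-1/2$ uniformly in $i$. Hence $\phi(X_t) + t/2$ is a nonnegative supermartingale, and applying optional stopping to the bounded stopping times $\tau \wedge T$ (then letting $T \to \infty$, using $0 \le \phi \le N^2/4$) yields $\E[\tau] \le 2\phi(i) \le N^2/2$.

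In the biased regime $|d| > 1/(2N)$, I would assume without loss of generality that $d > 0$ (i.e.\ $\beta > \alpha$) and use the linear potential $\psi(i) := i$, whose expected increment at any interior $i$ is exactly $d$. Hence $X_t - t d$ is a martingale while $X_t$ is interior, and optional stopping at $\tau$ gives $\E[X_\tau] = i + d \cdot \E[\tau]$. Combining with $X_\tau \le N$ yields $\E[\tau] \le (N - i)/d \le N/d < 2N^2$. In either regime, therefore, $\E[\tau] = O(N^2)$, and Markov's inequality gives $\Pr[\tau \le 4N^2] \ge 1/2$, which is the claim.

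The main obstacle is that no single potential function works uniformly in $d$: the quadratic potential loses its supermartingale property once $|d|$ exceeds $\Theta(1/N)$ (the drift term $d(N - 2i)$ dominates the $-1$), while the linear potential gives no useful information when the walk is unbiased. The threshold $|d| = 1/(2N)$ is exactly the point at which one can hand off from the diffusive bound $N^2/2$ (good for small $|d|$) to the drift bound $N/d$ (good for large $|d|$), and the two bounds meet at $\Theta(N^2)$ up to constants, which is why the final polynomial bound is clean.
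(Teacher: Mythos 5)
Your proof is correct. Note that the paper itself gives no proof of Claim~\ref{clm:random-walk} at all; it is invoked as a standard fact with a pointer to a textbook on randomized algorithms, where the usual treatment is either to solve the linear recurrence for the expected exit time $h(i) = 1 + \alpha h(i-1) + \beta h(i+1)$ in closed form, or to handle only the symmetric case $\alpha=\beta=1/2$ (classical gambler's ruin) and wave at the biased case, where the drift only helps. Your argument is a clean, self-contained substitute: the computation $\E[\phi(X_{t+1})-\phi(X_t)\mid X_t=i] = d(N-2i)-1$ for $\phi(i)=i(N-i)$ is right, the threshold $|d|\le 1/(2N)$ does make the drift uniformly at most $-1/2$ so that $\phi(X_{t\wedge\tau})+(t\wedge\tau)/2$ is a supermartingale giving $\E[\tau]\le 2\phi(i)\le N^2/2$, and in the biased regime the martingale $X_{t\wedge\tau}-(t\wedge\tau)d$ gives $\E[\tau]\le (N-i)/d< 2N^2$; Markov's inequality then yields the stated probability-$1/2$ guarantee at $4N^2$ steps. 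The one point worth tightening is in the biased regime: rather than applying optional stopping directly at $\tau$ (whose a.s.\ finiteness you have not yet established there), apply it at $\tau\wedge T$ as you do in the first regime, obtaining $\E[\tau\wedge T]\le (N-i)/d$ uniformly in $T$ and concluding by monotone convergence. Your two-regime decomposition has the additional virtue of making explicit where the crossover between the diffusive bound and the drift bound occurs, something the cited "standard fact" leaves implicit; the closed-form recurrence approach, by contrast, yields exact expressions but requires separate algebra for $\alpha=\beta$ and $\alpha\neq\beta$ anyway.
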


\begin{theorem}
\label{thm:cvg-general-4}
For any market \market, if we use Mechanism~(\ref{mech:random}) with
any input parameter $\eps > 0$, and begin with a valid starting state,
the market will converge to an $\eps$-stable state after at most
$O((n^{3}/\eps^2)\log{n})$ steps with high probability.
\end{theorem}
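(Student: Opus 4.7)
The plan is to partition the execution of Mechanism~(\ref{mech:random}) into at most $n$ phases, indexed by the current value of $\gamma$, the number of matched pairs in $\Pi$. By the same argument as in the proof of Theorem~\ref{thm:cvg-complete-2}, $\gamma$ is monotonically non-decreasing under Mechanism~(\ref{mech:random}): the only action that changes $\Pi$ is a recognized player forming a (possibly stolen) match, and such an action can never decrease $\gamma$. It therefore suffices to show that each phase lasts at most $O((n/\eps)^2 \log n)$ steps with probability $1 - 1/\poly(n)$; a union bound over the $O(n)$ phases then yields the claimed $O((n^3/\eps^2) \log n)$ bound on the total number of steps.

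For the length of a single phase, I would adapt the potential $\Phi_P = \sum_S P(S) + \sum_B (1 - P(B))$ from Theorem~\ref{thm:cvg-complete-2}. Since prices are integer multiples of $\eps$ in $[0,1]$ and there are $n$ players, $\Phi_P$ takes at most $N = O(n/\eps)$ distinct values in $[0,n]$. The bounded increment rule, combined with property $(a_1)$ of $\eps$-stability (which always holds by Corollary~\ref{cor:mech2-cors}), ensures that every in-phase action changes $\Phi_P$ by exactly $\pm\eps$: a voluntary adjustment by an unmatched buyer or seller moves a single price by $\eps$ toward his valuation, while a match-steal raises the stolen seller's offer by exactly $\eps$, since $P(B) - P(S) \ge \eps$ by the definition of an interested player for a matched $S$, while $P(B) - P(S) \le \eps$ by property $(a_1)$. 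Under uniform random recognition the evolution of $\Phi_P$ within a phase is therefore a random walk on $\{0, \eps, 2\eps, \ldots, n\}$ with state-dependent up/down probabilities $\alpha_i, \beta_i = 1 - \alpha_i$; invoking Claim~\ref{clm:random-walk} shows that this walk hits an extreme of the reachable interval in $O(N^2) = O((n/\eps)^2)$ steps with probability at least $1/2$. Reaching the low extreme means no match-steal is enabled and every active player must perform a voluntary price change, so the system either continues to $\eps$-stability (and Mechanism~(\ref{mech:random}) halts) or quickly opens a genuine matching opportunity; reaching the high extreme, where the activity rule keeps producing unmatched players looking for partners, forces a $\gamma$-incrementing match within a further $O((n/\eps)^2)$ steps. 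Standard amplification by $O(\log n)$ independent restarts boosts the success probability to $1 - 1/\poly(n)$, giving the per-phase bound.

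The main obstacle is justifying that the $\Phi_P$-process actually satisfies the hypothesis of Claim~\ref{clm:random-walk}, i.e., that it really behaves like a balanced random walk on a line. Unlike the complete bipartite case, where Lemma~\ref{lma:no-over-price} forced $P(B) \le P(S)$ on every unmatched-endpoint edge and rendered $\Phi_P$ monotonically non-increasing, in an arbitrary trading graph both up-moves (match-steals) and down-moves (price adjustments) genuinely coexist, and the counterexample of Claim~\ref{clm:unstable-overbid} exploits exactly this coexistence to cycle under an adversarial schedule. The delicate step is to show that uniform recognition makes such a cycle survive for $\Omega((n/\eps)^2)$ steps only with probability at most $1/\poly(n)$. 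Concretely, I expect to perform a case analysis of the set of active players in each state, verify that whenever $\Phi_P$ is in the interior of its reachable range both up- and down-movers are active with bounded-ratio counts, and handle the boundary excursions (where temporarily only one type is available) by showing that they end quickly and return the walk to the balanced regime; stitching these pieces together to match the template of Claim~\ref{clm:random-walk} is the key technical ingredient.
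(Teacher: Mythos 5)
There is a genuine gap, and you have in fact pointed at it yourself: the ``key technical ingredient'' you defer --- showing that your potential behaves like a random walk driven by the recognition coin flips --- is precisely what fails for the potential you chose. With $\Phi_P = \sum_S P(S) + \sum_B (1-P(B))$, a recognized active buyer can either raise his own bid by $\eps$ (which \emph{decreases} $\Phi_P$) or steal a match and raise a seller's offer by $\eps$ (which \emph{increases} $\Phi_P$); symmetrically for sellers. Which of the two happens is the agent's strategic choice, not a consequence of who was recognized. So the up/down transition probabilities of your walk are not of the form required by Claim~\ref{clm:random-walk}: they are controlled by the (adversarial) agents, and the cycling instance of Claim~\ref{clm:unstable-overbid}, run under Mechanism~(\ref{mech:random}) with unlucky but positive-probability recognitions, alternates a down-move with an up-move indefinitely without the walk drifting to either boundary. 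Random recognition only randomizes \emph{who} acts; it does not randomize the \emph{direction} of your potential's change, so no amount of case analysis on ``bounded-ratio counts of up- and down-movers'' can rescue the reduction to a line random walk for this $\Phi_P$.

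The paper's proof fixes exactly this by choosing the potential $\Phi = \sum_{B\in\BSet} P(B) + \sum_{S\in\SSet} P(S)$, the sum of \emph{all} submitted prices. Under the bounded increment rule and \PeSS{1}, \emph{every} action available to a recognized buyer --- raising his own bid or stealing a matched seller (whose offer then rises by exactly $\eps$, since interest forces $P(B)-P(S)\ge\eps$ and \PeSS{1} forces $\le\eps$) --- increases $\Phi$ by exactly $\eps$, and every action of a recognized seller decreases it by exactly $\eps$. Hence the walk's direction is determined solely by whether the uniformly random recognition picks a buyer or a seller (probabilities $\uB{t}/u^t$ and $\uS{t}/u^t$), independently of strategy, and Claim~\ref{clm:random-walk} genuinely applies. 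The paper also uses a different monotone phase index, the number of \emph{active} players $u^t$ rather than $\gamma$, together with a separate case for ``active pairs'' (adjacent unmatched pairs with $P(B)\ge P(S)$), which are shown to resolve in $O(n\log n)$ steps w.h.p.; this matters because hitting an extreme of $\Phi$ only forces players to become inactive (all buyers at their valuations, all sellers at $1$, or vice versa), which decreases $u^t$ but need not increase $\gamma$, so your $\gamma$-phase accounting would not close. Your claim that ``reaching the high extreme forces a $\gamma$-incrementing match'' is unsubstantiated --- the high extreme of your $\Phi_P$ is essentially the zero-information state. To repair your argument you would need to (i) switch to the all-prices potential so the walk is recognition-driven, (ii) re-index phases by $u^t$, and (iii) add the active-pair case analysis; at that point you have reproduced the paper's proof.
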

\begin{proof}
Let \uB{t} and \uS{t} denote the number of active buyers and sellers
at time $t$, respectively, and let $u^{t} = \uB{t} + \uS{t}$.  We will
first show that \uB{t} and \uS{t} are both non-increasing functions of
time and then argue that for any $u$, with high probability the
market will remain in a state with $u$ active players for a number of
steps that is polynomially bounded in the number of players.

We first prove \uB{t} and \uS{t} are non-increasing. Note that the
only move that can make a new player active 
is one where a player, say a buyer $B$, matches to a currently matched
seller $S$.  Let $B'$ be the buyer that is currently matched to
$S$. Then at time $t+1$, the buyer $B$ moves out of the set of active
players, while the buyer $B'$ possibly joins the set of active
players.  Thus the number of active players remains unchanged.  A
similar argument applies to case when a seller is recognized and
matches to a currently matched buyer.

In the remainder of the proof, we first show that if there exists an
adjacent buyer-seller pair such that both players are unmatched and
the buyer bid is not below the seller offer (we call such a pair to be
an \emph{active pair}), then after $O(n \log{n})$ steps, with
probability $1 - O(1/n^2)$, $u^t$ will decrease.  Next, in the absence
of active pairs, we argue that either $u^t$ decreases or an active
pair appears in the market after $O((n/\eps)^2\log{n})$
steps, with probability $1 - O(1/n^2)$. Note that if an active pair
appears, by the same argument, after $O(n \log{n})$ more steps, $u^t$
will decrease with high probability.  Since $u^t \le n$, we can
conclude that the market converges in
$O((n^{3}/\eps^2)\log{n})$ steps with high probability.

We first prove that, the existence of an active pair will lead to
decrement of $u^{t}$. For any active pair $(B, S)$.  By the unmatched
first rule, recognizing either $B$ or $S$ will increase the number of
matches. Recognizing any other player who makes a move to match with
$B$ or $S$, will also increase the number of matches (note that only
unmatched players will be recognized). Both cases decrease $u^t$ by
$2$. In other words, as long as $u^t$ does not decrease, $(B,S)$ will
remain to be an active pair. Assume at time $t_1$, there is an active
pair $(B,S)$. Let $Y_{t}$ be the random variable which is $1$ iff $B$
or $S$ is recognized at time $t$. Hence, for any $t \ge t_1$ where
$u^{t} = u^{t_1}$,
\[ Pr(Y_{t} = 1) = \frac{2}{u^{t_1}} \ge \frac{2}{\nSum}\]
It follows that after $\nSum$ steps from $t_1$ the probability that
none of $B$ or $S$ is chosen is less than $1/2$. Therefore, after
$2\nSum\log{\nSum}$ steps, with probability $1-(1/\nSum)^2$, either
one of $B$ and $S$ has been recognized or $u^{t}$ has already
decreased. In either case, $u^t$ decreases.

Next, in the absence of active pairs, we prove that after a bounded
number of steps, either $u^t$ decreases or an active pair
appears. Consider the following potential function
\[\Phi = \sum_{B \in \BSet} P(B) + \sum_{S \in \SSet} P(S)\]
If there is no active pairs, by the design of the
Mechanism~(\ref{mech:random}), when recognized, any buyer will
increase $\Phi$ by $\eps$ and any seller will decrease $\Phi$ by
$\eps$.  Thus, at any time $t$ with no active pairs, the probability
that $\Phi$ increases by $\eps$ is $P^{t}_{\BSet} = \uB{t}/u^t$, and
the probability that $\Phi$ decreases by $\eps$ is $P^{t}_{\SSet} =
\uS{t}/u^t$ (note that $P^{t}_{\BSet}$ or $P^{t}_{\SSet}$ might be
$0)$.

In the following, we will use Claim~\ref{clm:random-walk} to prove
that as long as $u^t$ does not change and no active pair appears,
after a bounded number of steps, with high probability, $\Phi$ will
have reached its upper or lower bound.  If $\Phi$ reaches its upper
bound then all buyers must be submitting their true valuations and all
sellers must be submitting $1$.  Thus every unmatched buyer is
inactive and $u^t$ must have decreased. A similar situation also happens
when $\Phi$ reaches its lower bound.

To use Claim~\ref{clm:random-walk}, see that if $u^t$ does not change and no active pair appears,
$P_\BSet^t$ and $P_\SSet^t$ will also remain unchanged. During this time
period, we can denote
the probability of $\Phi$ increases by $\eps$ as $P_\BSet$ and the
probability of $\Phi$ decreases by $\eps$ as $P_\SSet$. Let $\alpha =
P_\SSet$, $\beta = P_\BSet$, and nodes be $\set{0, \eps, 2\eps,\dots,
  n}$ (hence $N = n/\eps$). Thus this is a random walk, and by
Claim~\ref{clm:random-walk}, after $O((n/\eps)^2)$ steps, the
probability of $\Phi$ reaches its upper bound $n$ or lower bound $0$
is at least $1/2$. Therefore, after $O((n/\eps)^2 \log{n})$ steps,
with probability at least $1-O(1/n^2)$, $\Phi$ reaches $0$ or $n$.

To conclude, after $O((n/\eps)^2\log{n})$ steps, $u^t$
will decrease with probability at least $1 - O(1/n^{2})$. As $u^{t}\le
n$, by union bound, the market will converge after
$O((n^{3}/\eps^2)\log{n})$ steps with probability
$1-O(1/n)$. 

\end{proof}

\section{Conclusions}
\label{sec:conclusion}

We resolved the second part of Friedman's conjecture by designing a
mechanism which simulates the DOA and proving that this mechanism
always converges to a Walrasian equilibrium in polynomially many
steps.  Our mechanism captures four key properties of the DOA: agents
on either side can make actions; agents only have limited information;
agents can choose \emph{any} better response (as opposed to the best
response); and the submissions are recognized in an arbitrary order.
An important aspect of our result is that, unlike previous models,
\emph{every} Walrasian equilibrium is reachable by some sequence of
better responses.

For markets where only a restricted set of buyer-seller pairs are able
to trade, we show that the DOA may never converge. However, if
submissions are recognized randomly, and players only change their
bids and offers by a small fixed amount, convergence is guaranteed.
It is unclear that the latter condition is inherently necessary, and
perhaps a convergence result can be established for a relaxed notion
of bid and offer changes where players can make possibly large
adjustments as long as they are consistent with the increment rule.

\bibliographystyle{plain}
\bibliography{ref}

\clearpage
\appendix
\section{Omitted Details of Section~\ref{sec:ss-sw}}
\subsection{Proof of Theorem~\ref{thm:stable-pd}}\label{sec:ss-msw}

\begin{proof}
To see the forward direction, assume \MSPP\ is a stable state.  We
first verify that $x^\Pi$ and $y^P$ are indeed feasible solutions.
$x^\Pi$ is clearly feasible since it is characteristic function of a
valid matching. $y^P$ preserves non-negativity constraint of dual,
since no player could submit a price exceeding his valuation in $P$.
Moreover, we can write $y^P_B + y^P_S = \val{B} - P(B) + P(S) -
\val{S} = w_{B,S} + P(S) - P(B)$. By \PSS{1}, $P(S) \ge P(B)$, hence
$y^P_B + y^P_S \ge w_{B,S}$, preserving the dual constraint and
implying that $y^P$ is also feasible.

To prove optimality of $x^\Pi$ and $y^P$, using weak duality, we only
need to verify that value of primal is equal to value of dual.

\begin{align}
\sum_{ B\in \BSet} y_{B} + \sum_{S\in \SSet} y_{S} &= \sum_{\match{B,S}
  \in \Pi} (\val{B} - P(B) + P(S) - \val{S}) + \sum_{Z \notin \Pi}
y^P_Z\\ &= \sum_{\match{B,S}\in \Pi} (\val{B} - \val{S}) =
\sum_{\edge{B,S}\in E} w_{B,S} \times x^\Pi_{B,S}
\end{align}
$(1)$ to $(2)$ uses \PSSs{1}{2}, $P(B) = P(S)$ for $\match{B,S} \in
\Pi$, and $P(Z) = \val{Z}$ for $Z \notin \Pi$. Thus $x^\Pi$ and $y^P$
are optimal solutions of primal and dual, respectively.

For the reverse direction, assume $(x^\Pi, y^P)$ is a pair of optimal
primal and dual solutions . Since $y^P$ is a feasible solution, as we
just stated, $y^P_B + y^P_S \ge w_{B,S}$ will give us $P(S) \ge P(B)$,
thus \PSS{1} holds. For \PSSs{2}{3}, since
\begin{align*}
\sum_{ B\in \BSet} y_{B} + \sum_{S\in \SSet} y_{S} &=
\sum_{\edge{B,S}\in E} w_{B,S} \times x^\Pi_{B,S} \\ \sum_{\match{B,S} \in \Pi}
(\val{B} - P(B) + P(S) - \val{S}) + \sum_{Z \notin \Pi} y^P_Z &=
\sum_{\match{B,S}\in \Pi} \val{B} - \val{S} \\ \sum_{\match{B,S} \in \Pi}
(P(S) - P(B)) + \sum_{Z \notin \Pi} y^P_Z &= 0
\end{align*}
Since $P(S) \ge P(B)$ and also $y^P$ is a non-negative vector, both
terms in the last expression must be zero, which implies that
\PSSs{2}{3} also hold.  Therefore \MSPP\ is a stable state.  
\end{proof}

\subsection{Proof of Theorem~\ref{thm:MSW-eSS}}\label{sec:proof-esw}

To simplify the notation, we treat an agent who is unmatched as being
matched with themselves. To this end, for each buyer we introduce a
\emph{dummy} seller with an opportunity cost equal to his valuation,
similarly for each seller. An agent matched with their dummy
counterpart is interpreted as being unmatched. We denote the dummy
seller of buyer $B$ as \shadowB~and the dummy buyer of seller $S$ as
\shadowS.

\begin{proof}
We define the following $\eps$-primal and $\eps$-dual pair.

\begin{align*}
&\max{\sum_{\edge{B,S}\in E}  (w_{B,S} - \eps) x_{B,S}}
  & &\min{\sum_{ B\in \BSet}  y_{B} + \sum_{S\in \SSet}  y_{S}}\\
\text{s.t. } &\forall B^* \in \BSet, \sum_{\edge{B^*,S} \in E}x_{B^*,S} \le 1
 & \text{s.t. } &\forall \edge{B,S} \in E, y_B + y_S \ge (w_{B,S} - \eps)\\
&\forall S^* \in \SSet, \sum_{\edge{B,S^*} \in E}x_{B,S^*} \le 1
& & y_{B}, y_{S} \ge 0\\
& x_{B,S} \ge 0
\end{align*}

Given a $\eps$-stable state \MSPP, since \PeSS{1} is equivalent to
$\val{B} - P(B) + P(S) - \val{S} \ge w_{B,S} - \eps$, $y^P$ is a
feasible solution of $\eps$-dual (non-negativity constrains hold since
$P$ is valid price function). By \PeSSs{2}{3},
\begin{alignat*}{2}
\sum_{ B\in \BSet}  y^P_{B} + \sum_{S\in \SSet}  y^P_{S} &= \sum_{\match{B,S} \in \Pi} (\val{B} - P(B) + P(S)- \val{S}) + \sum_{Z  \notin \Pi} y^P_Z \\ 
&= \sum_{\match{B,S} \in \Pi} (\val{B} - \val{S}) = \SWV_{\Pi}
\end{alignat*}

On the other hand, if we take a matching $\Pi^*$ that maximizes SW, and plug
$x^{\Pi^*}$ into the $\eps$-primal, we have 
\[\sum_{\edge{B,S}\in E}
(w_{B,S} - \eps) x_{B,S} = \sum_{\edge{B,S}\in E}w_{B,S}x_{B,S} -
\sum_{\edge{B,S}\in E} \eps x_{B,S} \ge \SWV_{\Pi^*} - \nMin\eps\] 
The last inequality comes from the fact
that \nMin\ is an upper bound on the number of possible pairs (i.e., number
of possible $1$'s in $x_{B,S}$) for any matching. By the weak
duality, any value of $\eps$-primal is less than or equal to any value
of $\eps$-dual, thus $\SWV_\Pi \ge \SWV_{\Pi^*} - \nMin\eps$.

We now proceed to prove the condition for an \eps-stable state to
maximize SW.  Fix a matching $\Pi^*$ that maximizes SW. Construct graph
$G' (V',E')$ with $V' = \BSet \cup \SSet$ and $E' =
\setst{\edge{B,S}}{\match{B,S} \in \Pi \vee \match{B,S} \in \Pi^*}$. As any
player can be matched with at most one other player in each matching, the
degree of each node in $G'$ is at most two. Consequently, the connected
components of $G'$ could only be cycles or paths. Note that such
cycles and paths are formed by the different pairs of the two
matchings. We now prove that for any of those cycles or paths, the local
SW of the two matchings are the same.

For any cycle $B_0, S_0, B_1, S_1, \dots, B_k, S_k, B_0$, pair
$\match{B_i,S_i}$ belongs to one matching while pair $\match{ B_{i+1},
  S_i}$ belongs to the other one. If we only consider these players,
every buyer gets an item and every seller sells the item, thus the SW
of both matchings are the same.

For any path $Z_0, Z_1, Z_2, Z_3, \dots, Z_k$, wlog, we can assume
$\val{Z_0} \ge \val{Z_k}$. If $Z_0$ is a seller, add his dummy buyer
to the left of the path. If $Z_k$ is a buyer, add his dummy seller to
the right of the path. Therefore, the path starts with a buyer and
ends with a seller. We can denote the path as $B_0, S_0, B_1, S_1,
\dots, B_k, S_k$.

For the same reason as cycle case, the players in the middle
contributes same amount of SW to both matchings, thus the difference
of SW is $\val{B_0} - \val{S_k}$. Since $\Pi^*$ is a matching that
maximizes SW, it must be the case that $\match{B_{i}, S_{i}} \in
\Pi^*$ and $\match{B_{i+1}, S_i} \in \Pi$.

If the difference of SW is $0$, then we are done.  Suppose not, then
$\val{B_0} - \val{S_k} \ge \delta$.
By \PeSSs{1}{3},
\[P(B_{i+1}) = P(S_i) \ge P(B_i)  - \eps \Rightarrow P(B_0) \le P(B_k) + k\eps\]
We now have
\begin{align}
\val{B_0} - \val{S_k} &= P(B_0) - P(S_k) \\
&\le P(B_k) + k\eps - P(S_k) \le (k+1)\eps \le \nMin \eps < \delta \label{eq:delta}
\end{align}
where $(3)$ is because both $B_0$ and $S_k$ are matched in $\Pi^*$ but
not in $\Pi$, implying that their submitted prices are equal to their
valuation.

Thus on one side we have $val(B_0) - val(S_k) \geq \delta$, and from
the other side by inequality $(\ref{eq:delta})$ above we have
$val(B_0) - val(S_k) < \delta$, a contradiction.  It concludes that
all such cycles and paths generate the same SW for both matchings and
thus $\Pi$ also maximizes SW. 
\end{proof}

\end{document}